%% file: main.tex
\documentclass[aps,twocolumn,amsmath,amssymb,nofootinbib]{revtex4-2}

\usepackage{graphicx}
\usepackage{amsfonts}
\usepackage{amsthm}
\usepackage{xcolor}
\usepackage[normalem]{ulem}

\usepackage{booktabs}
\usepackage{algpseudocode}
\usepackage{tikz}
\usetikzlibrary{arrows.meta,positioning,fit,backgrounds,calc}
\usepackage{hyperref}
\hypersetup{colorlinks=true,linkcolor=blue,citecolor=blue,urlcolor=blue,filecolor=blue}
\usepackage[capitalize]{cleveref}

\providecommand{\mathbbm}[1]{\mathbf{#1}}

\theoremstyle{plain}
\newtheorem{theorem}{Theorem}

\theoremstyle{remark}
\newtheorem{remark}{Remark}

\begin{document}

\title{\texttt{DiffVortex}: A quantum-inspired fast solver for incompressible flows}

\author{Yu Jun Loo}
\email{looyujun@umich.edu}
\author{Silas Alben}
\author{Shravan Veerapaneni}
\affiliation{Department of Mathematics, University of Michigan, Ann Arbor, MI 48109, USA}

\begin{abstract}
Matrix-product-state (MPS) and quantized tensor-train (QTT) representations have recently emerged as a promising framework for computational fluid dynamics, with run time and memory scaling logarithmically in the mesh size when the bond dimension stays bounded. Existing MPS-based solvers are Eulerian: they encode the velocity field directly as an MPS and enforce immersed-object boundary conditions through static masks or body-fitted coordinates. Generalization to flexible or moving bodies is not straightforward in this framework: a deforming body necessitates a fresh MPS mask at every time step, and high-Reynolds-number flows develop filamentary vorticity structures that are expected to resist low-rank compression of the field itself. We propose a complementary route. Instead of compressing the flow state, we compress the Biot--Savart kernel that links vorticity to velocity, and we evolve the flow with a Lagrangian vortex-particle method, representing each immersed body by a bound vortex sheet on Lagrangian surface points. Our main theoretical result, a lifting theorem, shows that bounded QTT rank $r$ of the kernel implies bounded matrix-product-operator (MPO) rank of the associated convolution operator, independently of grid resolution; for the interleaved ordering used here, the explicit bound is $2^d r$ in $d$ dimensions and is sharp. Combined with quantics tensor cross interpolation, this turns the moment-to-local stage of a single-level fast multipole method into a precomputed tensor-network contraction with quasi-linear complexity, while preserving full Lagrangian flexibility for the embedded bodies. We implement the algorithm (\texttt{DiffVortex}) in PyTorch, so that the entire simulation pipeline is differentiable end-to-end; it runs on GPUs and reaches $6.7\times10^{7}$ particles on a single device. We validate the solver against reference vortex-method calculations for an impulsively started cylinder and against exact and reference solutions for inviscid elliptical vortices, and demonstrate it on two-dimensional flows past rigidly moving, multiple, and deforming bodies, including gradients propagated through the full time integration.
\end{abstract}

\maketitle

\section{Introduction}\label{sec:intro}

The discovery that the velocity and pressure fields of incompressible Navier--Stokes flows are often well approximated by matrix-product states (MPSs), the tensor networks that underlie the density-matrix renormalization group (DMRG) for one-dimensional quantum many-body systems, has opened a fast-growing research direction at the interface of tensor networks and computational fluid dynamics. Gourianov \emph{et al.}~\cite{gourianov2022quantum} first showed that the velocity field of two-dimensional decaying turbulence admits compact MPS representations, with the energy-cascade mechanism playing a role analogous to the local interactions that make one-dimensional quantum states compressible. 
Peddinti \emph{et al.}~\cite{peddinti2024quantum} extended this observation into a complete incompressible Navier--Stokes solver. A grid with $2^L$ cells per dimension is reshaped into $dL$ binary modes, where $d$ is the spatial dimension, and the velocity field is encoded as an MPS over these modes (a quantized tensor train, or QTT). Time stepping combines finite differences with Chorin projection, the pressure Poisson equation being solved with DMRG. Memory and per-step run time then scale as $\mathcal{O}(L\,\mathrm{poly}(\chi))$ rather than $\mathcal{O}(2^{dL})$, where $\chi$ is the MPS bond dimension, provided $\chi$ remains bounded as the grid is refined, such as when fine-scale structure is weakly correlated across scales. H\"olscher \emph{et al.}~\cite{holscher2025quantum} accelerated the MPS algorithm of Gourianov \emph{et al.} on GPUs using cuQuantum, simulating two-dimensional decaying jets and decaying turbulence at Reynolds numbers up to $10^7$, and van H\"ulst \emph{et al.}~\cite{vanhulst2025curvilinear} developed a tensor-network fractional-step method in body-fitted curvilinear coordinates, demonstrated on stationary and rotating cylinders, with aerodynamic and aeroacoustic applications discussed in~\cite{vanhulst2025aerodynamic}. The framework has since been extended to three-dimensional turbulence and synthetic-turbulence generation~\cite{pisoni2025compression}, compressible Euler flows~\cite{peddinti2025compressible}, reacting shear flows~\cite{pinkston2025reacting}, lattice Boltzmann discretizations~\cite{gross2026tensor}, finite-element formulations~\cite{kornev2024tetrafem}, pseudospectral methods~\cite{gidi2025pseudospectral}, and probability distribution functions of turbulence~\cite{gourianov2025tensor}; Amaral \emph{et al.}~\cite{amaral2025review} provide a review of this rapidly developing field.

Immersed bodies pose a particular challenge for this Eulerian formulation. The state-of-the-art treatment, introduced in~\cite{peddinti2024quantum}, is the \emph{approximate MPS mask}: a smooth surrogate $m(\boldsymbol{x})$ of the body indicator function $\theta(\boldsymbol{x})$ is compressed by TT-cross interpolation~\cite{oseledets2010tt,nunez2025learning} and applied element-wise to the velocity MPS at every time step. The construction is elegant for static rigid bodies and has been extended to curvilinear coordinates aligned with simple rotating bodies~\cite{vanhulst2025curvilinear}; the lattice-Boltzmann variant of~\cite{gross2026tensor}, which enforces no-slip conditions through MPS-compressed bounce-back, is likewise tied to a static geometry. To our knowledge, no MPS/QTT-based CFD solver in the published literature handles deforming bodies, or moving bodies beyond these special cases. Two difficulties arise in extending Eulerian MPS solvers in this direction. First, for a deforming body the mask must be rebuilt at every step. Each rebuild is a TT-cross construction whose cost grows with the bond dimension of the mask, and that bond dimension grows quickly for thin or sharp geometries~\cite{peddinti2024quantum}. Rigid translations by an integer number of grid cells can instead be applied exactly with MPO shift operators; no comparable operator exists for a general deformation.

Second, compressibility of the velocity field itself is not guaranteed. Low-rank MPS descriptions have been demonstrated mainly for two-dimensional decaying turbulence and moderate-Reynolds-number flows~\cite{gourianov2022quantum,peddinti2024quantum}. At $\mathrm{Re}=10^7$, H\"olscher \emph{et al.}~\cite{holscher2025quantum} find that the bond dimension required at a fixed truncation error saturates, but that this truncation no longer reproduces the energy spectrum, and that Kelvin--Helmholtz roll-up of shear layers raises the required bond dimension. Flows past moving bodies are dominated by such structures: vortex sheets shed from the body roll up into thin filaments that are nearly discontinuous at the grid scale, and are therefore expected to inflate the bond dimension needed to represent the velocity field.

We propose a quantum-inspired solver that addresses both difficulties by compressing the operator rather than the flow state: instead of encoding the velocity field as an MPS, we compress the (regularized) Biot--Savart kernel that maps vorticity to velocity. Because the kernel depends only on relative position, it is fixed for the entire simulation: its matrix-product-operator (MPO) representation is precomputed once in an offline stage, and reused at every time step regardless of how the body and the vortex particles evolve. The flow itself is represented by Lagrangian vortex particles, which naturally trace the vorticity field and accommodate filamentary structures without tensor-compression error; each immersed body is represented by a bound vortex sheet on Lagrangian surface points, which releases vorticity into the particle field. This approach complements the framework of~\cite{peddinti2024quantum} in two respects: it eliminates the need to re-mask the velocity at each time step, and it removes the requirement that the velocity field itself admit a low-rank MPS representation.

The central technical observation underlying this strategy is a \emph{lifting theorem}. The use of QTT representations for translation-invariant operators goes back to Khoromskij~\cite{khoromskij2011quantics} and Kazeev \emph{et al.}~\cite{kazeev2013multilevel}, who studied multilevel Toeplitz matrices and high-dimensional convolution; more recently, Chen \emph{et al.}~\cite{chen2023quantum} showed that the discrete Fourier transform itself admits a low-rank MPO representation in quantics coordinates. In particular, for the serial (also called fused) ordering of the binary coordinates, Kazeev \emph{et al.} showed that a generating vector of MPS (QTT) rank $p$ yields a general multilevel Toeplitz matrix of MPO rank at most $2p$. Their generating vector encodes both positive and negative coordinate differences. Here we establish the corresponding result for the interleaved ordering used in our algorithm: if a $d$-dimensional kernel has MPS rank $r$ when sampled on a uniform grid in interleaved binary coordinates, then the associated one-sided (triangular) convolution operator admits an MPO of rank at most $2^d r$. For the one-sided operators considered here, the corresponding serial and interleaved rank-inflation constants are sharp, as demonstrated by separable exponential kernels. Thus, under either of the two natural bit orderings, grid-independent tensor rank of the kernel implies grid-independent tensor rank of the convolution operator, although the kernel ranks themselves depend on the ordering.

This lifting theorem suggests that tensor compression is useful not only for storage, but also for accelerating the long-range interaction itself. Corona \emph{et al.}~\cite{corona2017tensor} used QTT to compress and invert discretized boundary-integral operators, exploiting their translation invariance on uniform grids, and, building on TT-cross interpolation~\cite{oseledets2010tt}, recent advances in quantics tensor cross interpolation (QTCI)~\cite{nunez2025learning,ghahremani2024cross} have made black-box construction of such tensor representations practical for a wide class of smooth kernels. We use these tools to construct a compact MPO representation of the moment-to-local (M2L) stage of a single-level fast multipole method (FMM)~\cite{greengard1987fast} directly from kernel evaluations. When the M2L operator has bounded MPO rank, the long-range step becomes a tensor-network contraction whose cost scales linearly in the number of boxes, up to logarithmic factors, removing the cost quadratic in the number of boxes that otherwise restricts single-level FMM to $\mathcal{O}(N^{4/3})$ complexity~\cite{martinsson2019fast}. Tensor compression has previously been used in computational electromagnetics to reduce the storage cost of FMM translation operators and multipole coefficients~\cite{qian2020compression,qian2019tensor,yucel2017compression}; here the compressed representation instead plays a direct algorithmic role, reducing both memory and application cost. Although ordinary TT compression could in principle provide the same acceleration, this would require low separation rank across the physical coordinates, whereas QTT exposes low-rank structure across binary scales and can therefore remain efficient even when such physical-coordinate separability is absent. Like ordinary TT compression, the rank of the serial QTT ordering depends on the separation rank across physical coordinates, hence we use the interleaved ordering in this work.

We package the resulting algorithm as \texttt{DiffVortex}, implemented in PyTorch~\cite{paszke2019pytorch} so that the entire simulation is differentiable end-to-end (\Cref{sec:method}); gradients of a downstream objective with respect to particle, kernel, or body-shape parameters then follow directly, without hand-derived adjoint equations. Immersed bodies enter only through the surface points, the bound vortex sheet and the wall vorticity flux, so prescribed moving and deforming bodies are handled by the same machinery as rigid ones; freely moving bodies require only coupling to a body model (\Cref{sec:discussion}). No compression is applied to the instantaneous flow state: all tensor-compression error is confined to the static, precomputed moment-to-local tensor and can be reduced by tightening the QTCI tolerance, at the cost of a larger rank.

The contributions of this paper are threefold. (i)~An interleaved-ordering lifting theorem: a one-sided convolution operator generated by a kernel of MPS rank $r$ admits an MPO of rank at most $2^d r$, complementing the serial-ordering construction of Kazeev \emph{et al.}~\cite{kazeev2013multilevel}; the rank-inflation constants $2$ and $2^d$ are sharp for the one-sided operators considered here. (ii)~\texttt{DiffVortex}, a differentiable single-level fast-multipole vortex-particle solver in which the moment-to-local operator is precomputed once as an MPO and reused unchanged at every time step; combined with the classical bound-vortex-sheet treatment of immersed bodies~\cite{koumoutsakos1995high}, it handles moving and deforming bodies, requiring neither re-masking nor a low-rank velocity field. (iii)~Numerical evidence: MPO ranks of the moment-to-local operator that do not grow with grid size, scaling to $6.7\times10^{7}$ particles on a single GPU, validation against reference vortex-method and exact solutions, demonstrations with moving, multiple and deforming bodies, and gradients propagated through the full time integration.

The paper is organized as follows. \Cref{sec:formulation} states the vorticity formulation and the treatment of immersed bodies; \Cref{sec:mpo} states the lifting theorem; \Cref{sec:method} describes the \texttt{DiffVortex} algorithm and its complexity; \Cref{sec:examples} presents numerical results; and \Cref{sec:discussion} discusses extensions. The proof of the lifting theorem, the comparison of the two orderings, and the moment-to-local coefficients are given in the appendices.

\section{Formulation}\label{sec:formulation}

We consider the two-dimensional incompressible Navier--Stokes equations in the
fluid domain
\[
    \Omega_f(t)=\Omega\setminus\mathcal{B}(t),
\]
where $\mathcal{B}(t)$ denotes the region formed by the union of one or more immersed bodies:
\begin{align}
\partial_t \boldsymbol{u}
+(\boldsymbol{u}\cdot\nabla)\boldsymbol{u}
&=
-\frac{1}{\varrho}\nabla p
+\nu\nabla^2\boldsymbol{u},
\label{eq:ns_mom}\\
\nabla\cdot\boldsymbol{u}&=0.
\label{eq:ns_inc}
\end{align}
where $\varrho$ is the fluid density, $p$ the pressure, and $\nu$ the kinematic viscosity.
On the body surface we prescribe the material velocity
$\boldsymbol{u}_{\partial\mathcal{B}}$ and impose no-slip,
\begin{equation}
    \boldsymbol{u}
    =
    \boldsymbol{u}_{\partial\mathcal{B}}
    \qquad\text{on }\partial\mathcal{B}(t).
\end{equation}

Introducing the scalar vorticity
\[
    \omega = \partial_x u_y-\partial_y u_x,
\]
and taking the curl of \eqref{eq:ns_mom} gives
\begin{equation}
    \partial_t\omega
    +\boldsymbol{u}\cdot\nabla\omega
    =
    \nu\nabla^2\omega
    \qquad \text{in }\Omega_f(t).
\label{eq:vorticity}
\end{equation}
The velocity induced by the fluid vorticity is given by the Biot--Savart law (the contribution of the bound vortex sheet on $\partial\mathcal{B}$ is added in \Cref{sec:embedding}),
\begin{align}
\boldsymbol{u}(\boldsymbol{x},t) &= \boldsymbol{u}_\infty(\boldsymbol{x},t)
\nonumber\\
&\quad+ \int_{\Omega_f(t)}\boldsymbol{K}(\boldsymbol{x}- \boldsymbol{x}')\,
\omega(\boldsymbol{x}',t)\,dA_{\boldsymbol{x}'},
\label{eq:biot_savart_unreg}
\end{align}
where, with $\boldsymbol{r}^{\perp}=(-r_2,r_1)$, the Biot--Savart kernel is
\begin{equation}
\boldsymbol{K}(\boldsymbol{r})
=
\frac{1}{2\pi}
\frac{\boldsymbol{r}^{\perp}}{|\boldsymbol{r}|^2}.
\label{eq:bs_kernel}
\end{equation}
and $\boldsymbol{u}_\infty$ is the imposed background (free-stream) velocity. \Cref{eq:biot_savart_unreg} implicitly enforces incompressibility. In the vorticity--velocity formulation, the tangential no-slip condition is
not imposed directly on $\omega$. Instead, following Lighthill~\cite{lighthill1963boundary} (see also~\cite{wu1998boundary}), the vorticity generated at a
viscous solid boundary enters through a Neumann-type vorticity-flux
condition on $\partial\mathcal{B}(t)$
\cite{koumoutsakos1994boundary,koumoutsakos1995high,
ploumhans2000vortex} to cancel the slip velocity. Numerically, this flux is incorporated into the
viscous update of the vortex particles so that the reconstructed fluid velocity
matches the prescribed wall motion. We use the classical vortex-method
treatment of this boundary process~\cite{koumoutsakos1994boundary,koumoutsakos1995high,
ploumhans2000vortex} and focus here on the acceleration of
the resulting Biot--Savart interactions.

The vorticity is represented by a collection of Lagrangian point vortices,

\begin{equation}
     \omega(\boldsymbol{x},t) \approx \sum_{j=1}^{N}\Gamma_j(t)\, \delta_{\boldsymbol{x}_j(t)}(\boldsymbol{x}).
\end{equation}

Here $\boldsymbol{x}_j$ and $\Gamma_j$ denote the particle positions and
circulations, and $\delta_{\boldsymbol{x}_j(t)}$ is the Dirac mass at the particle position $\boldsymbol{x}_j(t)$. Whenever a mesh-based approximation of the vorticity is required, these particle strengths are interpolated onto a Cartesian grid with mesh spacing $\Delta x=\mathcal{O}(\delta)$, with $\delta$ the blob radius of \eqref{eq:biot_savart_kernel}, using the compactly supported interpolating $M_4'$ kernel~\cite{eldredge2007numerical}. The particles are advected with the local fluid velocity,
while viscosity and wall-generated vorticity modify their circulations
according to the vortex-particle discretization described in
\Cref{sec:method}. The diffusion step is applied with particle strength exchange (PSE)~\cite{koumoutsakos1995high} with kernel width $\sigma = \sqrt{2\nu\Delta t}$, and the mesh spacing is set to $\Delta x = \tfrac{2}{3}\sigma$ with blob radius $\delta = \Delta x$, so that the overlap condition holds; these choices fix the diffusion number at $\nu\Delta t/\Delta x^{2}=9/8$. The particles are periodically remeshed onto this lattice to control Lagrangian distortion~\cite{koumoutsakos2008flow}, so $\Delta x$ is also the particle spacing. In the inviscid calculations of \Cref{sec:results-physical} there is no PSE step and $\sigma$ plays no role; there $\Delta x$ is fixed by the number of particles laid over the vortex support and $\delta = \tfrac{3}{2}\Delta x$.

The velocity associated with the particle distribution is recovered from
the regularized Biot--Savart law,
\begin{equation}
    \boldsymbol{u}(\boldsymbol{x},t)
    =
    \boldsymbol{u}_\infty(\boldsymbol{x},t)
    +
    \sum_{j=1}^{N}
    \boldsymbol{K}_\delta
    \!\left(
        \boldsymbol{x}-\boldsymbol{x}_j(t)
    \right)
    \Gamma_j(t),
\label{eq:biot_savart}
\end{equation}
where, for the vortex-blob regularization used here~\cite{krasny1986desingularization},
\begin{equation}
    \boldsymbol{K}_\delta(\boldsymbol{r})
    =
    \frac{1}{2\pi}
    \frac{\boldsymbol{r}^{\perp}}
    {|\boldsymbol{r}|^2+\delta^2}.
\label{eq:biot_savart_kernel}
\end{equation}
This is a regularized point-vortex quadrature of \eqref{eq:biot_savart_unreg}; the bound-sheet contribution is treated separately in \Cref{sec:embedding}.

Like the singular kernel, $\boldsymbol{K}_\delta$ depends only on the relative displacement $\boldsymbol{r}$, so the velocity evaluation is a translation-invariant $N$-body sum. This is the property that all fast summation methods exploit. In a fast multipole method on a uniform box grid, for example, the moment-to-local translations depend only on relative box positions and are precomputed once~\cite{greengard1987fast,martinsson2019fast}; kernel-independent variants further compress them by SVD or FFT~\cite{ying2004kernel,fong2009black}, and adaptive treecodes apply the same idea to regularized vortex particles~\cite{lindsay2001particle}. Our departure is in how this fixed operator is compressed. Rather than treating each translation separately, we compress the full moment-to-local tensor, over all box offsets and expansion indices, as a single MPO in quantized coordinates (\Cref{sec:mpo,sec:method}).

\subsection{Embedding flexible bodies}\label{sec:embedding}

A flexible body is represented in two dimensions by an ordered set of
surface points
$\{\boldsymbol{y}_k(t)\}_{k=1}^{N_\partial}$; when several bodies are present, $N_\partial$ denotes the total number of surface points over all of them.
For an actively deforming body the configuration
$\boldsymbol{y}_k(t)$ is prescribed, while for a passive body it evolves
under the appropriate structural equations. The body enters the vortex
formulation through a bound vortex sheet supported on
$\partial\mathcal{B}(t)$, together with the viscous production and release
of vorticity from the surface.

At each time step, the instantaneous bound-sheet strength is determined
from the no-penetration (impermeability) condition
\begin{equation}
\boldsymbol{n}(\boldsymbol{y}_k,t)\cdot
\left[
\boldsymbol{u}_{\omega}(\boldsymbol{y}_k,t)
+
\boldsymbol{u}_{\gamma}(\boldsymbol{y}_k,t)
-
\boldsymbol{u}_{\partial\mathcal{B}}(\boldsymbol{y}_k,t)
\right]
=0,
\label{eq:no_penetration}
\end{equation}
where
\begin{equation}
\boldsymbol{u}_{\omega}(\boldsymbol{y}_k,t)
=
\sum_{j=1}^{N}
\boldsymbol{K}_{\delta}
\bigl(\boldsymbol{y}_k-\boldsymbol{x}_j\bigr)\Gamma_j(t)
\end{equation}
is the velocity induced by the free vortex particles, while
\begin{equation}
\boldsymbol{u}_{\gamma}(\boldsymbol{y}_k,t)
=
\operatorname{p.v.}
\int_{\partial\mathcal{B}(t)}
\boldsymbol{K}
\bigl(\boldsymbol{y}_k-\boldsymbol{y}'\bigr)
\gamma(\boldsymbol{y}',t)\,ds'
\end{equation}
is the velocity induced by the bound vortex sheet, evaluated with the
\emph{unregularized} Biot--Savart kernel~\eqref{eq:bs_kernel}. Discretizing the sheet at the surface points gives the unknown strengths $\gamma_k(t)\approx\gamma(\boldsymbol{y}_k,t)$, $k=1,\dots,N_\partial$, from which the circulation released by panel $k$ is $\gamma_k\,\Delta s_k$ with $\Delta s_k$ the panel arc length.
The resulting boundary-integral equation is supplemented by the
appropriate circulation constraint, so that the bound-sheet solution is
consistent with Kelvin's circulation theorem
\cite{koumoutsakos1994boundary,koumoutsakos1995high,
ploumhans2000vortex, eldredge2007numerical}.

The no-penetration condition determines the instantaneous bound vortex
sheet, but does not by itself enforce no-slip. The remaining tangential
velocity mismatch on the fluid side of the surface represents the
vorticity that must be generated by viscous stresses at the wall.
Following classical viscous vortex methods
\cite{koumoutsakos1994boundary,koumoutsakos1995high,
ploumhans2000vortex}, the bound vortex sheet is released from the surface into the
fluid as a wall vorticity flux and incorporated into the free vortex-particle
population. In the present implementation the sheet is re-solved against the
current particle field at every step and shed in its entirety, distributed
over a diffusion length $\sqrt{2\nu\Delta t}$ normal to the surface. In this way the no-slip condition is recovered through vorticity production, rather than imposed directly as a boundary condition on the
vorticity--velocity formulation.

The two velocity evaluations therefore play distinct roles. Interactions
involving the bound vortex sheet are evaluated with the singular boundary
kernel, whereas interactions among free particles use the regularized
kernel $\boldsymbol{K}_{\delta}$. The tensor-network acceleration
developed below applies to the latter long-range particle interactions;
the body geometry itself remains fully Lagrangian and may translate,
rotate, or deform without reconstruction of an Eulerian mask.

\section{The lifting theorem}\label{sec:mpo}

\subsection{Quantized coordinates and MPO contractions}\label{sec:bitinterleaved}
Following~\cite{khoromskij2011quantics,oseledets2010tt,corona2017tensor,lindsey2023multiscale}, we write a one-dimensional index $i\in\{0,\dots,2^L-1\}$ in binary as

\[
i=\sum_{k=0}^{L-1}i^{(k)}2^k,
\]

and interpret a vector $c$ of length $2^L$ as an $L$-fold tensor
$\widetilde c_{i^{(0)}\cdots i^{(L-1)}}=c_i$. A $2^L\times 2^L$ matrix $A$ is correspondingly reshaped as
\begin{equation}
\widetilde A_{i^{(0)}j^{(0)}\cdots i^{(L-1)}j^{(L-1)}}=A_{ij}.
\end{equation}
If this tensor admits an MPO representation with maximal bond dimension (rank) $r$, it can be applied to a dense vector in $\mathcal{O}(r^2L\,2^L)$ operations by contracting one bit level at a time~\cite{corona2017tensor}.

In $d$ dimensions, there are two natural orderings of the quantized coordinates. The \emph{serial} ordering groups all bits belonging to the same physical coordinate, whereas the \emph{interleaved} ordering groups the $d$ coordinate bits belonging to the same dyadic level into one core, so that the cores run over levels $k=0,\dots,L-1$, each carrying $(i_1^{(k)},\dots,i_d^{(k)})$.

\subsection{Statement}\label{sec:lifting-statement}
Let a scalar function $f:[0,1]^d\to\mathbb{R}$ be sampled on the uniform grid $x_i=i/(2^L-1)$, $i=0,\dots,2^L-1$, in each coordinate, giving the \emph{sampled kernel} $f_{\boldsymbol{i}}=f(x_{i_1},\dots,x_{i_d})$ with $2^{Ld}$ entries. We denote the maximal MPS (QTT) ranks of $f_{\boldsymbol{i}}$ in the serial and interleaved orderings by $r_{\mathrm S}$ and $r_{\mathrm I}$, respectively. The associated one-sided (triangular) discrete convolution operator is
\begin{equation}
(T_f)_{\boldsymbol{i},\boldsymbol{j}}= f(x_{i_1}-x_{j_1},\dots,x_{i_d}-x_{j_d})\,
\mathbbm{1}_{i_1\geq j_1,\dots,i_d\geq j_d}.
\label{eq:discrete-convolution}
\end{equation}

\begin{theorem}[Lifting]\label{thm:lifting}
Let $r_{\mathrm S}$ and $r_{\mathrm I}$ be the exact maximal MPS ranks of the sampled kernel $f_{\boldsymbol{i}}$ in the serial and interleaved orderings, respectively. Then the convolution operator $T_f$ in \eqref{eq:discrete-convolution} admits MPO representations, in the same two orderings, whose maximal ranks $R_{\mathrm S}$ and $R_{\mathrm I}$ satisfy
\begin{equation}
R_{\mathrm S}\leq 2r_{\mathrm S},
\qquad
R_{\mathrm I}\leq 2^d r_{\mathrm I}.
\label{eq:serial-interleaved-lifting}
\end{equation}
For $L\geq2$, both constants are sharp.
\end{theorem}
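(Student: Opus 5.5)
The plan is to exploit the uniform grid: since $x_i=i/(2^L-1)$, we have $x_{i}-x_{j}=x_{i-j}$ whenever $i\geq j$. Hence $(T_f)_{\boldsymbol i,\boldsymbol j}=f_{\boldsymbol i-\boldsymbol j}\,\mathbbm{1}_{\boldsymbol i\geq\boldsymbol j}$, and the operator is the composition of two maps: the kernel tensor, and an exact \emph{binary subtraction with borrow} in quantized coordinates. This is a finite-state transducer with one borrow bit per physical coordinate.

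\textbf{Upper bounds.} In one coordinate, write $m=i-j$. Processing bits from $k=0$ upward with borrow $b_0=0$, we have
\[
m^{(k)}=(i^{(k)}-j^{(k)}-b_k)\bmod 2,\qquad b_{k+1}=\mathbbm{1}\bigl[i^{(k)}-j^{(k)}-b_k<0\bigr],
\]
and $i\geq j$ if and only if $b_L=0$. Let the kernel be given as an MPS with cores $A_k[\cdot]$ of rank $\leq r$. In the interleaved ordering, I define the MPO core at level $k$, with bond index $(\alpha,\boldsymbol b)\in[r]\times\{0,1\}^d$, by
\[
G_k[\boldsymbol i^{(k)},\boldsymbol j^{(k)}]_{(\alpha,\boldsymbol b),(\alpha',\boldsymbol b')}=A_k[\boldsymbol m^{(k)}]_{\alpha\alpha'}\prod_{l=1}^d\delta\bigl(b'_l,\ \mathrm{borrow}(i_l^{(k)},j_l^{(k)},b_l)\bigr).
\]
The left boundary fixes $\boldsymbol b=0$ and the right boundary projects onto $\boldsymbol b=0$. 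Contracting these cores reproduces $f_{\boldsymbol i-\boldsymbol j}$ times the indicator, so $R_{\mathrm I}\leq 2^d r_{\mathrm I}$.

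In the serial ordering only one coordinate's borrow is live at any bond. At the bond between the last bit of coordinate $l$ and the first bit of coordinate $l+1$, I project the borrow onto $0$, which imposes the condition $i_l\geq j_l$, and restart with borrow $0$. Every bond then carries $(\alpha,b)$ with $b\in\{0,1\}$, giving $R_{\mathrm S}\leq 2r_{\mathrm S}$.

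If the MPS happens to be ordered from most to least significant bit, I will instead carry the ``incoming borrow'' as a guessed state that is checked one core later. This gives the same state count.

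\textbf{Sharpness.} I take the separable exponential $f(\boldsymbol x)=\exp(\sum_l a_l x_l)$. It factorizes over every bit, so $r_{\mathrm S}=r_{\mathrm I}=1$. The operator factorizes as $T_f=\bigotimes_l D_lE D_l^{-1}$, where $D_l=\mathrm{diag}(e^{a_lx_i})$ and $E$ is the lower-triangular all-ones matrix. The diagonal scalings act core-wise and do not change any unfolding rank, so it suffices to bound unfolding ranks of $\bigotimes E$.

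For one coordinate, I unfold $E$ after the level-$0$ bond: rows are indexed by $(i^{(0)},j^{(0)})$ and columns by the high parts $(i',j')$. The entry equals $\mathbbm{1}[i'>j']+\mathbbm{1}[i'=j']\,\mathbbm{1}[i^{(0)}\geq j^{(0)}]$. For $L\geq2$, both the event $i'>j'$ and the event $i'=j'$ occur, so the two column functions are linearly independent and the rank is exactly $2$. This gives sharpness for $d=1$ and in the serial ordering, since a single bond inside one coordinate's block suffices there. In the interleaved ordering, the unfolding after the first core is the Kronecker product of $d$ such rank-$2$ unfoldings, so its rank is $2^d$.

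\textbf{Expected obstacle.} I expect the main obstacle to be the bookkeeping of borrow direction versus core ordering, and making sure the serial boundary projection is correct. The sharpness computation itself looks routine.
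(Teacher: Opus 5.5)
Your proposal is correct and follows essentially the same route as the paper. For the upper bounds, you thread the borrow through the bond index: a joint borrow vector in $\{0,1\}^d$ for the interleaved ordering, and a single borrow per block, terminated and restarted at block boundaries, for the serial one. For sharpness, you use the separable exponential with the lexicographic rank-two splitting of the triangular indicator, tensorized across coordinates.

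The differences are cosmetic. You order cores least significant bit first, whereas the paper's appendix uses the reverse. You remove the exponential by the diagonal similarity $D E D^{-1}$ before unfolding. You check a single cut rather than every cut, which is enough for the lower bound.
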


These estimates follow from the observation that bit-by-bit subtraction generates one borrow bit per coordinate, which can be threaded through the bond dimension as a finite-state machine. Carrying the arithmetic carry on the bond index is the same device used for QTT shift and difference operators~\cite{kazeev2012laplace} and for affine coordinate transformations of quantics variables~\cite{rohshap2025twoparticle}. The factor $2^d$ counts the possible states of the joint borrow. Sharpness follows from the separable exponential

\[
f(\boldsymbol{x})=\prod_{\mu=1}^d a_\mu^{x_\mu},\qquad a_\mu>0,
\]

which has MPS rank one in either ordering but generates convolution operators of exact MPO rank $2$ in the serial ordering and $2^d$ in the interleaved ordering. A complete proof is given in \Cref{app:proof}.

The theorem also transfers to approximate ranks in the max norm. Every entry of $T_f$ is either zero or an entry of the sampled kernel, so if $\tilde f_{\boldsymbol{i}}$ has rank $r$ and $\max_{\boldsymbol{i}}|f_{\boldsymbol{i}}-\tilde f_{\boldsymbol{i}}|\le\varepsilon$, then $T_{\tilde f}$ has rank at most $2r$ (serial) or $2^d r$ (interleaved) and satisfies $\max_{\boldsymbol{i},\boldsymbol{j}}|(T_f-T_{\tilde f})_{\boldsymbol{i},\boldsymbol{j}}|\le\varepsilon$.

\begin{remark}
The two estimates are conditional on the rank of the kernel in their respective orderings and therefore do not imply that either ordering is uniformly preferable. Indeed, $r_{\mathrm S}$ and $r_{\mathrm I}$ arise from different tensor unfoldings and need not have the same dependence on either $L$ or $d$. Consequently, the explicit factors $2$ and $2^d$ do not capture the full dimension dependence of the resulting MPO ranks. Furthermore, \Cref{thm:lifting} concerns exact algebraic tensor ranks. In practice, the kernel and operator are approximated at a prescribed tolerance, and their numerical ranks may be substantially smaller than the exact bounds. Beyond the max-norm transfer noted above, neither estimate in \eqref{eq:serial-interleaved-lifting} controls this singular-value truncation; the numerical behavior of the two orderings is compared in \Cref{app:ordering-comparison}, where we find that the interleaved ordering has lower (for $\sqrt{b}\ge32$) and more stable maximum ranks for the kernel considered in this study, despite having a larger explicit factor.
\end{remark}

\begin{remark}
For smooth kernels, the approximate MPS rank of the sampled kernel is controlled by its regularity rather than directly by the mesh, rigorously so in one dimension~\cite{lindsey2023multiscale}. The lifting theorem shows that, for either ordering, grid-independent kernel rank (exact, or approximate in the max norm) implies grid-independent operator rank. Numerically, the corresponding MPO ranks remain nearly bounded for the real-analytic kernel considered here ($d=2$; see \Cref{app:ordering-comparison} and~\cite{nunez2025learning}).
\end{remark}

\begin{remark}
Combined with rank bounds for smooth kernels, the lifting theorem explains why the convolution operators they induce can be compressed in quantized coordinates even when the kernels are not separable in physical coordinates. It complements the empirical observations of~\cite{gourianov2022quantum,peddinti2024quantum,holscher2025quantum} that moderate-Reynolds-number velocity \emph{fields} admit compact MPS representations.
\end{remark}

\begin{remark}
The preceding result extends to two-sided convolution operators.
For each subset $S\subseteq\{1,\ldots,d\}$, define the reflection
\[
\alpha_S[f](x_1,\ldots,x_d)
=
f(\sigma_1x_1,\ldots,\sigma_dx_d),
\]
where
\[
\sigma_k=
\begin{cases}
-1, & k\in S,\\
 1, & k\notin S.
\end{cases}
\]
Suppose that the restriction of $\alpha_S[f]$ to the grid on
$[0,1]^d$ admits a bit-interleaved MPS representation of rank
at most $r$ for every $S$. Then the two-sided convolution operator
\[
\widetilde F_{i_1,j_1,\ldots,i_d,j_d}
=
f(x_1-y_1,\ldots,x_d-y_d)
\]
admits an MPO representation with rank at most
\[
5^d r-r+1.
\]
To see this, let
\[
P_k=\mathbbm{1}_{\{i_k\ge j_k\}},\qquad
N_k=\mathbbm{1}_{\{i_k\le j_k\}},\qquad
D_k=\mathbbm{1}_{\{i_k=j_k\}}.
\]
Since $P_k+N_k-D_k=1$, we have the exact identity
\[
\begin{aligned}
\widetilde F_{\mathbf i,\mathbf j}
&=f(\mathbf x-\mathbf y)
\prod_{k=1}^d(P_k+N_k-D_k)\\
&=
\sum_{K\subseteq\{1,\ldots,d\}}
(-1)^{|K|}\\
&\qquad\times
\left(\prod_{k\in K}D_k\right)
f(\mathbf x-\mathbf y)
\prod_{\ell\notin K}(P_\ell+N_\ell).
\end{aligned}
\]
Expanding the remaining product (i.e.\ $\prod_{\ell\notin K}(P_\ell+N_\ell)$) yields
$2^{d-|K|}$ one-sided convolution operators,
each involving $d-|K|$ active coordinate differences.
By the preceding theorem, each has MPO rank
at most $2^{d-|K|}r$. Since the $2^{d-|K|}$ of them are summed, each term in the sum has rank at most $2^{d-|K|}\times 2^{d-|K|}r = 4^{d-|K|}r$, with the exception of the term corresponding to $K=\{1,\ldots,d\}$, $f(\boldsymbol 0)I$,
which has rank one rather than $r$. Since there are $\binom dk$ subsets $K$ with $|K| = k$,
\[
\begin{aligned}
\operatorname{rank}_{\mathrm{MPO}}
(\widetilde F)
&\le
\sum_{k=0}^{d-1}
\binom dk 4^{d-k}r+1\\
&=(5^d-1)r+1.
\end{aligned}
\]

This covers the moment-to-local tensor of \Cref{sec:method}, which is a
two-sided convolution with the near-field stencil masked out; the mask is
absorbed into the kernel and does not increase the bound.
\end{remark}

\section{Algorithm: \texttt{DiffVortex}}\label{sec:method}

The procedure for one time step is summarized in \Cref{fig:pipeline}.

\begin{figure*}[!htbp]
\centering
\begin{tikzpicture}[
  font=\footnotesize,
  box/.style={rounded corners=3pt, align=center, text width=4.2cm, inner sep=5pt,
    execute at begin node={\hyphenpenalty=10000\exhyphenpenalty=10000}},
  phase/.style={box, draw=blue!55!cyan!75!black, fill=blue!6, minimum height=2.75cm},
  offline/.style={box, draw=orange!75!red!70!black, fill=orange!12, minimum height=1.3cm},
  arr/.style={-{Latex[length=2.2mm]}, thick},
  alab/.style={font=\scriptsize, inner sep=1.5pt, align=center},
]
\node[phase] (p1) at (0,0) {{\bfseries I.\ Velocity evaluation}\\[3pt]
  bin particles; moments $M_\alpha(\mathcal{C}_q)$\\
  contract with $\widehat F$ \eqref{eq:far_field}; evaluate\\ local expansions\\
  near-field direct sum \eqref{eq:biot_savart}\\ \mbox{over $\|p-q\|_\infty\leq w$}};
\node[phase] (p2) at (6.15,0) {{\bfseries II.\ Boundary coupling}\\[3pt]
  solve for $\gamma_k$: no-penetration\\ \eqref{eq:no_penetration} + circulation constraint\\
  add sheet-induced velocity\\ (fast or direct summation)};
\node[phase] (p3) at (12.3,0) {{\bfseries III.\ Evolution}\\[3pt]
  advect particles (RK2)\\ update body $\boldsymbol{y}_k$\\ (prescribed or elastic)\\
  inject wall vorticity; PSE\\ remesh if needed};
\node[offline] (off) at (0,2.8) {{\bfseries Offline, once}\\[2pt]
  QTCI compression of\\ $F_{p-q,\gamma,\alpha}$ \eqref{eq:far_field} $\to$ MPO $\widehat F$};
\draw[arr] (off) -- node[alab, right=2pt]{$\widehat F$, fixed} (p1);
\draw[arr] (p1) -- node[alab, above=1pt]{$\boldsymbol u_\omega$} (p2);
\draw[arr] (p2) -- node[alab, above=1pt]{$\boldsymbol u$} (p3);
\draw[arr] (p3.south) -- ++(0,-0.45) -| node[alab, pos=0.25, below=1pt]{$t\to t+\Delta t$:\; particles and body move, $\widehat F$ unchanged} (p1.south);
\coordinate (lb) at ($(p1.south)+(0,-0.75)$);
\begin{scope}[on background layer]
\node[draw=gray!60, dashed, rounded corners=5pt, inner sep=6pt, fit=(p1)(p2)(p3)(lb)] (loop) {};
\end{scope}
\node[anchor=south east, font=\footnotesize\itshape, text=gray!60!black] at (loop.north east) {every time step};
\end{tikzpicture}
%
\caption{One time step of \texttt{DiffVortex}. The inputs are the particle positions $\{\boldsymbol{x}_j\}$ and strengths $\{\Gamma_j\}$, the surface points $\{\boldsymbol{y}_k\}$ with prescribed wall velocity $\boldsymbol{u}_{\partial\mathcal{B}}$, and the MPO $\widehat F$. The moment-to-local operator $\widehat F$ (orange) is constructed once, offline, by quantics tensor cross interpolation of the tensor in \eqref{eq:far_field}, and is independent of the body's motion and shape. Phase~I evaluates the velocity by contracting $\widehat F$ against the box moments in a single bit-level pass, evaluating the resulting local expansions at the particles, and adding a direct sum over the near-field stencil. Phase~II solves for the bound-sheet strengths and adds the sheet-induced velocity, computed with the singular kernel $\boldsymbol{K}$. Phase~III advects the particles, updates the body, and releases wall vorticity into the flow. Everything inside the dashed loop is repeated every time step; with the second-order Runge--Kutta (RK2) integrator, Phases~I and~II are evaluated at each stage. The total cost per step is $\mathcal{O}(NP^2\log N)$.}\label{fig:pipeline}
\end{figure*}

\subsection{Single-level fast multipole with MPO compression}
We cover the particle cloud with a uniform $\sqrt{b}\times\sqrt{b}$ grid of $b$ square boxes of side $h$, with centers $\boldsymbol{c}_p$. The compressed operator is built for a fixed choice of $x$ and $y$ limits, box side $h$ and blob radius $\delta$, so the grid is fixed for the whole run and is sized so that the wake does not reach its edge; the few particles that do leave are discarded, and the circulation they carry is monitored as a diagnostic. For two well-separated boxes, with center separation $R$ exceeding $(2+\sqrt{2})h$, the stream function $\psi$ of the blob-regularized Biot--Savart kernel admits a multipole expansion of order $P$, i.e.\ truncated after $P$ terms of a logarithmic series (\Cref{app:coeffs}),
\begin{equation}
\psi(\boldsymbol{t},\boldsymbol{s}) = \sum_{\gamma}\boldsymbol{t}^{\gamma}\sum_{\alpha} C_{\gamma,\alpha}(\boldsymbol{R})\,\boldsymbol{s}^{\alpha},
\end{equation}
where $\boldsymbol{t}$ and $\boldsymbol{s}$ are target and source positions relative to their box centers, $\boldsymbol{R}$ is the center-to-center displacement of the boxes, the multi-indices satisfy $|\gamma|+|\alpha|\le 2P$, and $C_{\gamma,\alpha}$ are explicit coefficients derived in \Cref{app:coeffs}. The induced velocity in a target box is then a sum over source-box moments $M_\alpha(\mathcal{C}_q) = \sum_{j\in \mathcal{C}_q}\Gamma_j\,\boldsymbol{s}_j^{\alpha}$, with $\boldsymbol{s}_j=\boldsymbol{x}_j-\boldsymbol{c}_q$.

The far-field contribution to the velocity at a particle $\boldsymbol{x}_i$ in box $p$, with $\boldsymbol{t}_i=\boldsymbol{x}_i-\boldsymbol{c}_p$, is
\begin{equation}
\begin{aligned}
u^{\mathrm{far}}_i &= -\sum_{|\gamma|\ge 1}\gamma_2\,\boldsymbol{t}_i^{\gamma-(0,1)}\sum_{\alpha}\sum_{q} F_{p-q,\gamma,\alpha}\,M_\alpha(\mathcal{C}_q), \\
v^{\mathrm{far}}_i &= \sum_{|\gamma|\ge 1}\gamma_1\,\boldsymbol{t}_i^{\gamma-(1,0)}\sum_{\alpha}\sum_{q} F_{p-q,\gamma,\alpha}\,M_\alpha(\mathcal{C}_q),
\end{aligned}
\label{eq:far_field}
\end{equation}
with $F_{p-q,\gamma,\alpha} = \mathbbm{1}_{\{\|p-q\|_\infty>w\}}\,C_{\gamma,\alpha}(\boldsymbol{c}_p-\boldsymbol{c}_q)$,
where $\mathbbm{1}_{\{\|p-q\|_\infty>w\}}$ excludes the near-field stencil, a $(2w+1)\times(2w+1)$ block of boxes of half-width $w$. This half-width is dictated by convergence of the multipole series: the expansion converges for all particle positions provided that $R > (2+\sqrt{2})h \approx 3.41h$. The nearest box pair on the ring $\|p-q\|_\infty = m$ is the axis-aligned one, at separation $mh$, so convergence requires the first far ring to satisfy $m \ge 4$, i.e.\ $w \ge 3$. We take $w = 4$ throughout. The near-field cost is $\mathcal{O}(N)$ at fixed box occupancy $N/b$, with $(2w+1)^2N/b = 81\,N/b$ pair interactions per particle; since the far-field contraction is instead $\mathcal{O}(bP^2)$ and independent of $N$, the two terms trade off against each other and $b$ should be chosen to grow with $N$ (see \Cref{sec:results-time}). For each fixed $(\gamma,\alpha)$, $F_{p-q,\gamma,\alpha}$ is a translation-invariant function of $p-q$ sampled on a uniform $\sqrt{b}\times\sqrt{b}$ grid, essentially the setting of~\Cref{thm:lifting}; for any fixed $p,q$, $F_{p - q,\gamma,\alpha}$ has the binomial (Pascal-type) structure of \Cref{app:coeffs} in $(\gamma,\alpha)$, which in practice also compresses well (\Cref{sec:results-rank}). Indexing the tensor with the interleaved bits of $(p,q)$ followed by the interleaved bits of $(\gamma,\alpha)$, with the multi-indices packed triangularly ($|\gamma|+|\alpha|\le2P$), and applying quantics tensor cross interpolation~\cite{nunez2025learning}, we obtain an MPO representation $\widehat F$ whose ranks do not grow with $b$ for the kernels studied here (\Cref{sec:results-rank}), consistent with smoothness-based rank bounds~\cite{lindsey2023multiscale}. The contraction $\sum_q F_{p-q,\gamma,\alpha} M_\alpha(\mathcal{C}_q)$ is then performed in $\mathcal{O}(bP^2\chi^2)$ operations up to logarithmic factors, where $\chi$ is the MPO bond dimension. Since $\chi$ does not grow with $b$ for the kernels studied here (\Cref{sec:results-rank}), the factor $\chi^2$ is a bounded constant and the cost is $\mathcal{O}(bP^2)$ in $b$ and $P$.

\subsection{Complexity}
For uniformly distributed particles with $b\sim N$, the cost per velocity evaluation breaks down as
\begin{equation}
\begin{aligned}
T(N) &= \underbrace{\mathcal{O}(NP^2)}_{\text{moments}} + \underbrace{\mathcal{O}\bigl((2w+1)^2 N^2/b\bigr)}_{\text{near field}} \\
&+ \underbrace{\mathcal{O}(bP^2 \log b)}_{\text{far-field MPO contraction}}
\quad+ \underbrace{\mathcal{O}(NP^2)}_{\text{local expansion eval.}} \\
&= \mathcal{O}(NP^2\log N)
\end{aligned}
\end{equation}
i.e.\ linear in the number of particles, up to a logarithmic factor, for fixed $P$, $w$ and occupancy $N/b$. Each RK2 step requires two such evaluations, and the construction of $\widehat F$ is a one-time offline cost (\Cref{app:compression}). The boundary coupling adds the cost of the no-penetration solve, $\mathcal{O}(N_\partial^3)$ when solved directly, together with the evaluation of the sheet-induced velocity at the particles. The latter is performed as a direct $\mathcal{O}(N N_\partial)$ sum over panels, using the unregularized kernel with an analytic near-panel correction; with $N_\partial = \mathcal{O}(\sqrt{N})$ under uniform refinement this is $\mathcal{O}(N^{3/2})$ and is the subdominant term only while $N_\partial$ remains small compared with the bulk cloud. The classical $b^2 P^4$ barrier of single-level FMM~\cite{martinsson2019fast} is removed by the MPO contraction, whose bounded rank is explained by the lifting theorem.

\subsection{Implementation}
\texttt{DiffVortex} is implemented in PyTorch. All operator structure (the box geometry, the bit-interleaved index pattern of $\widehat F$, the contraction order, and the near-field stencils) is fixed when the operator is built, so the per-step computation graph is a sequence of dense, differentiable tensor contractions over a fixed index structure. Within the fast summation, the particle-to-moment accumulation, the near-field blob kernel and the local-expansion evaluation are compiled and fused with \texttt{torch.compile}~\cite{ansel2024pytorch2} in its shape-polymorphic mode, so the changing particle and per-box counts do not force recompilation at every step. The moment-to-local contraction itself is left in eager mode. Since every stage is expressed in standard tensor algebra, the full time-stepping loop is compatible with automatic differentiation: gradients of a downstream objective (drag, thrust, or a shape functional) with respect to particle strengths, body kinematics, or geometric parameters can be obtained directly by backpropagating through the simulation, without hand-derived adjoint equations. A few steps are not smooth in the parameters: box assignment, the remeshing lattice and the number of released particles are piecewise constant, so the computed gradient is that of the locally smooth branch. \Cref{fig:gradient-validation} shows that it agrees with centered finite differences to ten significant figures. Reverse mode through the MPO contraction requires some care: the interleaved layout that bounds the forward sweep's working set also holds the number of live intermediates near its maximum for most of the sweep, so naive reverse-mode accumulation retains all $\mathcal{O}(\log b)$ of them at once. We therefore checkpoint each sweep segment, which bounds the backward peak to one or two intermediates in exchange for recomputing the contraction once. This also sidesteps the dynamic tree traversal that complicates classical FMM implementations on accelerator hardware, and the solver maps directly onto GPUs (\Cref{sec:results-time}).

\section{Numerical results}\label{sec:examples}
We organize the numerical results into three groups: (i) MPO rank diagnostics for the compressed M2L operator, which determine the feasibility of the framework; (ii) compute-time scaling for randomly distributed particles, which tests the predicted near-linear, $\mathcal{O}(N\log N)$, complexity; and (iii) physical benchmarks, namely an impulsively started cylinder, inviscid elliptical vortices, and moving, multiple and deforming bodies, together with a check of gradients through the full time integration.

\subsection{MPO rank diagnostics for the M2L operator}\label{sec:results-rank}

\Cref{fig:rank-bsqrt,fig:rank-sensitivity} report the MPO ranks of the moment-to-local tensor $F_{p-q,\gamma,\alpha}$ constructed by quantics tensor cross interpolation, as a function of the grid size $\sqrt{b}$, the regularization parameter $\delta$, and the QTCI relative tolerance. The numerical experiments use the two-dimensional blob-regularized Biot--Savart kernel of~\eqref{eq:biot_savart_kernel} and the multipole expansion of~\eqref{eq:far_field}, on the unit square, with the near-field mask $\mathbbm{1}_{\{\|p-q\|_\infty>4\}}$ of \Cref{sec:method} baked into the compressed tensor. The sweep is a full product grid: for each pair $(P,\delta)$ with $P\in\{3,4\}$ and $\delta\in\{10^{-1},10^{-2},10^{-3},10^{-4},10^{-5}\}$, the grid size is swept over $\sqrt{b}\in\{256,512,1024,2048,4096,8192\}$. The QTCI relative tolerance is $10^{-8}$ in this sweep and is varied separately in \Cref{fig:rank-sensitivity}(b). The bit layout is the interleaved, least-significant-bit-first ordering with triangular packing of the multipole indices (\Cref{sec:bitinterleaved,sec:method}).

Cross-interpolation libraries typically report the error at the most recently selected pivot, but we find that using this pivot error as a convergence criterion leads to premature termination. We therefore determine all reported ranks using a held-out criterion: the tensor train is evaluated on a random sample of index patterns not used as pivots, and the sweep terminates once this validation error reaches the prescribed tolerance. A subsequent TT-rounding sweep then truncates the QTT to the same tolerance~\cite{nunez2025learning}.

\begin{figure*}[!htbp]
\centering
\includegraphics[width=1\linewidth]{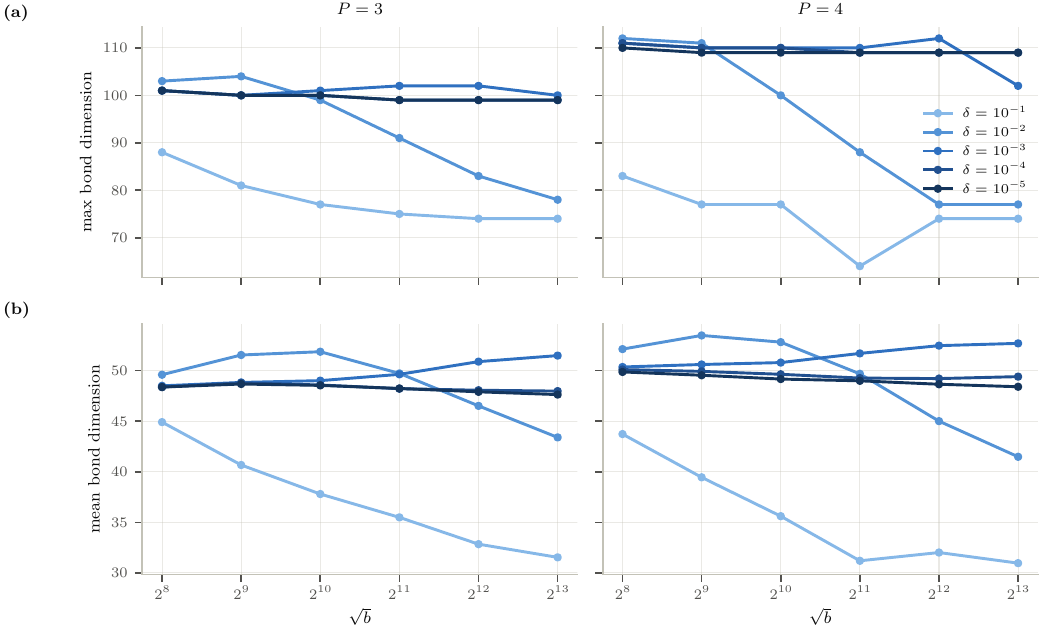}
\caption{MPO bond dimension of the moment-to-local tensor $F_{p-q,\gamma,\alpha}$ against the number of boxes per side $\sqrt{b}$, at QTCI relative tolerance $10^{-8}$. Columns: $P=3$ (left) and $P=4$ (right); one curve per regularization parameter $\delta\in\{10^{-1},\dots,10^{-5}\}$.
\textbf{(a)}~Maximum bond dimension. It does not grow with $\sqrt{b}$ over a $32$-fold range: for $\delta\le10^{-3}$ it is essentially constant (about $100$ for $P=3$ and $110$ for $P=4$), and for $\delta=10^{-1}$ and $10^{-2}$ it decreases as the grid is refined. This is consistent with \Cref{thm:lifting} combined with smoothness-based rank bounds for the kernel~\cite{lindsey2023multiscale}.
\textbf{(b)}~Mean bond dimension over the internal bonds, roughly half the maximum, so the maximum overstates the storage cost of the MPO.}\label{fig:rank-bsqrt}
\end{figure*}

\begin{figure*}[!htbp]
\centering
\includegraphics[width=1\linewidth]{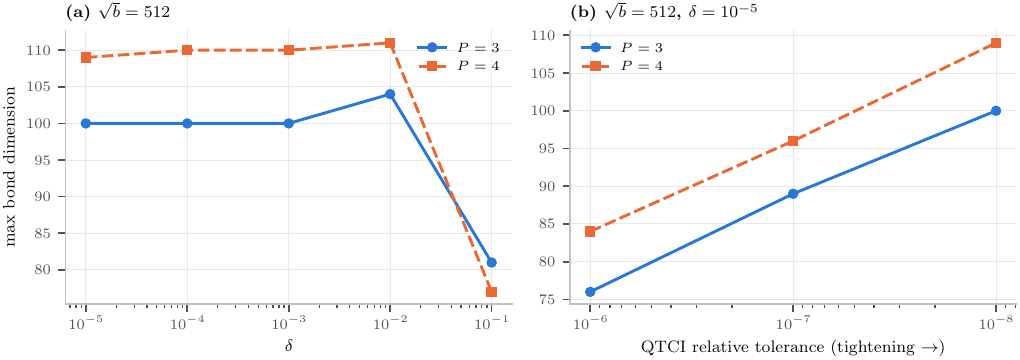}
\caption{Sensitivity of the maximum MPO bond dimension to the regularization parameter and the QTCI tolerance, at $\sqrt{b}=512$, for $P=3$ and $P=4$.
\textbf{(a)}~Against $\delta$, at tolerance $10^{-8}$ (a fixed-$\sqrt{b}$ slice of the sweep in \Cref{fig:rank-bsqrt}). The bond dimension rises between $\delta=10^{-1}$ and $10^{-2}$ and is essentially constant for $\delta\le10^{-3}$.
\textbf{(b)}~Against the QTCI relative tolerance, at $\delta=10^{-5}$. Over the three tolerances tested ($10^{-6}$, $10^{-7}$, $10^{-8}$) the bond dimension grows by about $12$ per decade of tolerance, consistent with logarithmic growth in $1/\text{tolerance}$ over this range.}\label{fig:rank-sensitivity}
\end{figure*}

These diagnostics are consistent with~\Cref{thm:lifting}: the MPO bond dimension of the M2L operator is essentially independent of the grid size $b$ (\Cref{fig:rank-bsqrt}) and increases and then saturates as the kernel's singularity is approached, and grows slowly as the QTCI tolerance is tightened (\Cref{fig:rank-sensitivity}). The construction is therefore robust for the blob-regularized kernels used throughout the rest of this paper.

\subsection{Compute-time scaling}\label{sec:results-time}

\Cref{tab:timing} and \Cref{fig:timing} report the wall-clock time per velocity evaluation for uniformly distributed random particles on the unit square. Two regimes are reported, because no single experiment covers the whole range: direct $\mathcal{O}(N^2)$ summation is exact and is the correct baseline, but it is only affordable up to $N\sim10^6$, whereas the compressed solver is run well past that.

The two panels of \Cref{fig:timing} hold different quantities fixed. In~(a) the box count $b$ is held \emph{fixed} while $N$ varies. This is the regime traversed within a single run, in which $b$ is necessarily fixed: the compressed operator $\widehat F$ is built once, offline, and reused as the particle count grows over the course of a simulation. It also separates the two terms of the complexity estimate cleanly. The far-field contraction costs $\mathcal{O}(bP^2)$ and is independent of $N$, whereas the near field costs $(2w+1)^2\,N\,(N/b)$---linear in $N$ at fixed occupancy, but growing as the occupancy $N/b$ rises. A run at fixed $b$ therefore traverses a curve on which the far-field cost is flat and the near-field cost grows, and the balance between the two is what determines how $b$ should be sized for a given $N$. Over the range where direct summation remains affordable, the compressed solver's cost is dominated by the terms that do not grow with $N$: the $\mathcal{O}(bP^2)$ contraction, and a near-field sweep that visits every occupied box whether that box holds two particles or two hundred. At $\sqrt{b}=64$ and $N=10^4$ there are only ${\sim}2.4$ particles per box, and the measured time is dominated by the near-field sweep, essentially a per-box overhead that saturates once most boxes are occupied and that accounts for the bulk of the measured time at this occupancy, the cost being dominated by visiting every target box and launching the corresponding GPU kernels rather than by arithmetic. The effect is visible in the two curves: at $\sqrt{b}=64$, where the boxes are already populated at $N=10^4$, the time rises only from $1.06$ to $1.17$\,s as $N$ triples, whereas at $\sqrt{b}=128$, where they are not, it rises from $2.13$ to $3.94$\,s over the same interval as the boxes fill. The curve is correspondingly flat, and the crossover $N^*$ that (a) reports is the point at which the $\mathcal{O}(N^2)$ baseline overtakes that fixed cost.

Panel~(b) holds the occupancy $N/b$ fixed instead, refining the box grid as $\sqrt{b}\propto\sqrt{N}$. Along that line the near-field work per particle is constant by construction and the far-field cost grows in proportion to $b$, so the total is linear in $N$. This is also the sizing rule a practitioner should follow: choose $b$ so that the occupancy stays in the range where the per-box work amortizes the per-box overhead, and the method's cost then tracks $N$ over more than two decades on CPU; on GPU the measured slope rises to ${\approx}1.5$ above $N\approx2\times10^6$. \Cref{app:compression} reports the one-time cost of constructing the compressed operator.
\begin{table}[!htbp]
\centering
\caption{Wall-clock time per velocity evaluation for $N$ uniformly distributed random particles on the unit square, at fixed $\sqrt{b}$. All runs use $P=4$, $\delta=10^{-3}$, near-field stencil $\|p-q\|_\infty\le4$ and QTCI tolerance $10^{-7}$, on a single A100 GPU (MIG 3g.40gb partition). \emph{Fast} is the MPO-compressed solver (compressed far field plus direct near field); \emph{Direct} is exact $\mathcal{O}(N^2)$ pairwise summation on the same device, so each row compares algorithms rather than hardware. Speed-up is direct time divided by fast time. The last column is $10^4$ times the relative $L^2$ error of the fast velocity against the direct result, and is therefore an end-to-end error for the whole pipeline rather than a property of the compressed operator alone. Compression is a one-off cost per $(P,\sqrt{b},\delta)$, amortized over every evaluation in a run (\Cref{app:compression}).}\label{tab:timing}
\small
\setlength{\tabcolsep}{4pt}
\input{figs/timing_table.tex}
\end{table}

\begin{figure*}[!htbp]
\centering
\includegraphics[width=\linewidth]{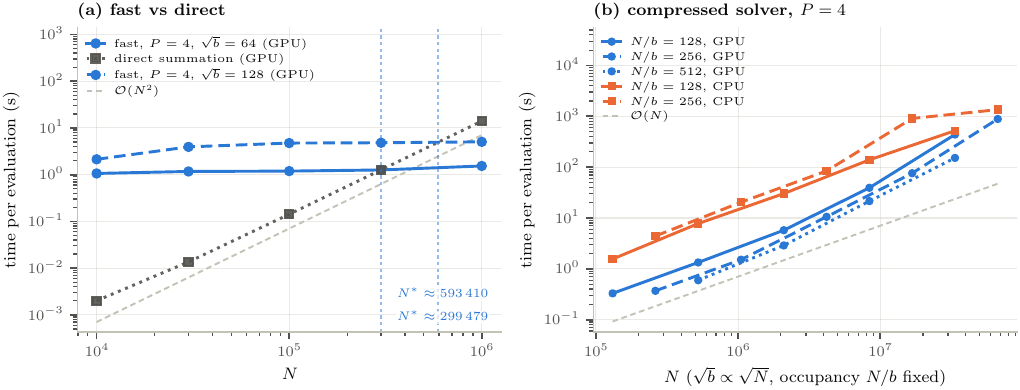}
\caption{Compute-time scaling of \texttt{DiffVortex} for $N$ particles drawn uniformly on the unit square, at $P=4$, $\delta=10^{-3}$, near-field stencil $\|p-q\|_\infty\le4$ and QTCI tolerance $10^{-7}$. GPU timings use a single A100 (MIG 3g.40gb partition); CPU timings use eight cores. ``Fast'' denotes the MPO-compressed solver.
\textbf{(a)}~Fixed $\sqrt{b}$, $N$ from $10^{4}$ to $10^{6}$: the compressed solver against exact $\mathcal{O}(N^{2})$ direct summation on the same GPU, which is affordable only up to $N=10^{6}$ ($10^{12}$ pairwise evaluations) and supplies the end-to-end error in \Cref{tab:timing}. Dotted vertical lines mark the crossover $N^{*}$ beyond which the compressed solver is faster: $N^{*}\approx3.0\times10^{5}$ at $\sqrt{b}=64$ and $N^{*}\approx5.9\times10^{5}$ at $\sqrt{b}=128$.
\textbf{(b)}~Fixed occupancy $N/b$, so that $\sqrt{b}$ grows as $\sqrt{N}$. GPU runs extend to $N=3.4\times10^{7}$ at occupancies $128$ and $512$ and to $N=6.7\times10^{7}$ at occupancy $256$; CPU runs, at occupancies $128$ and $256$ only, extend to $3.4\times10^{7}$ and $6.7\times10^{7}$, respectively. Direct summation is unaffordable throughout this range.}\label{fig:timing}
\end{figure*}

We note that the compressed operator's own held-out validation error (i.e., the quantity the QTCI construction converges against) is several orders of magnitude smaller than the end-to-end error of the resulting velocity field. The two measure different things. The former estimates how well the tensor train reproduces the moment-to-local tensor $F_{p-q,\gamma,\alpha}$; the latter also contains the truncation error of the multipole expansion at order $P$, which for the well-separated pairs retained here scales as $|z|^{P}$ with $|z|\le(2\sqrt{2}hR+2h^2)/\Lambda^2$ (\Cref{app:coeffs}); at the closest retained separation $R=(w+1)h=5h$ this bound is $0.65$, and it is attained only by particles in the far corners of the two boxes, so the typical error is far smaller. At the orders used in practice it is the multipole truncation, not the tensor compression, that sets the accuracy of the method: tightening the QTCI tolerance below that level buys nothing but rank. The tolerance should accordingly be chosen just fine enough to remain invisible against the $P$-truncation error. At fixed $\sqrt{b}$, both the near-field stencil half-width $w$ and the multipole order $P$ can be raised if more accuracy is required, at the cost of a new offline construction of $\widehat F$.

\subsection{Physical benchmarks}\label{sec:results-physical}

\paragraph{Impulsively started cylinder.}
We first validate the viscous formulation against the high-resolution vortex-method calculations of Koumoutsakos and Leonard~\cite{koumoutsakos1995high} for an impulsively started circular cylinder. Starting from an initially irrotational fluid, vorticity is generated at the cylinder surface and released into the fluid through the wall vorticity flux procedure described in \Cref{sec:formulation}. Following~\cite{koumoutsakos1995high}, we define $\mathrm{Re}=UD/\nu$ and the nondimensional time $T=2Ut/D$, where $U$ is the free-stream speed and $D$ the cylinder diameter. \Cref{fig:cylinder-drag}(a) compares the transient drag over $0\leq T\leq 0.5$ at $\mathrm{Re}=40$, $200$, $1000$, and $10^4$ with the reference calculations of~\cite{koumoutsakos1995high}. This interval captures the initial development of the flow following the impulsive start, during which thin boundary layers form along the cylinder and separate into vortex sheets that become thinner as the Reynolds number increases. The same precomputed MPO is used throughout each calculation. The regularization $\delta$ may be chosen independently of the compression parameters; here it is tied to the boundary-layer thickness and so varies with Reynolds number. The ranks depend only weakly on it, varying by a factor of $1.4$ across four decades of $\delta$ (\Cref{fig:rank-bsqrt}), so the compressed interaction costs essentially the same at every Reynolds number considered. \Cref{fig:cylinder-drag}(b) shows the drag at $\mathrm{Re} = 9500$ over $0\leq T \leq 5$. The results are in good agreement over $0 \leq T \leq 4$, after which the two curves diverge, our drag remaining near $C_D\approx1.2$ where the reference falls to $\approx 0.9$, possibly due to the chaotic evolution of vortices at high $\mathrm{Re}$~\cite{aref1983integrable}.

\begin{figure*}[!htbp]
\centering
\includegraphics[width=1\linewidth]{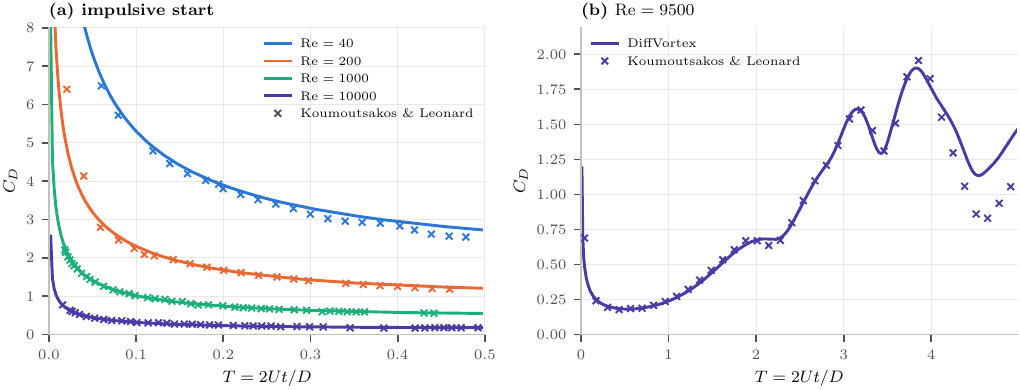}
\caption{Transient drag coefficient $C_D$ on an impulsively started circular cylinder, against the nondimensional time $T=2Ut/D$, where $U$ is the free-stream speed and $D$ the cylinder diameter. Lines: \texttt{DiffVortex}; crosses: the vortex-method calculations of Koumoutsakos and Leonard~\cite{koumoutsakos1995high}.
\textbf{(a)}~Early-time drag, $0\le T\le 0.5$, at $\mathrm{Re}=40$, $200$, $1000$ and $10^4$; the two calculations agree closely at all four Reynolds numbers.
\textbf{(b)}~$\mathrm{Re}=9500$ over $0\le T\le 5$; the curves agree closely up to $T\approx4$ and diverge thereafter (see text).}
\label{fig:cylinder-drag}
\end{figure*}

\Cref{fig:cylinder-high-re} additionally compares the vorticity field at $\mathrm{Re}=9500$ and $T=5$ with the corresponding late-time vortex dynamics reported
in~\cite{koumoutsakos1995high}. By this time the initially shed layers have rolled up into compact vortices connected to the cylinder by thin sheets, and the two calculations agree well.

\begin{figure*}[!htbp]
\centering
\includegraphics[width=\linewidth]{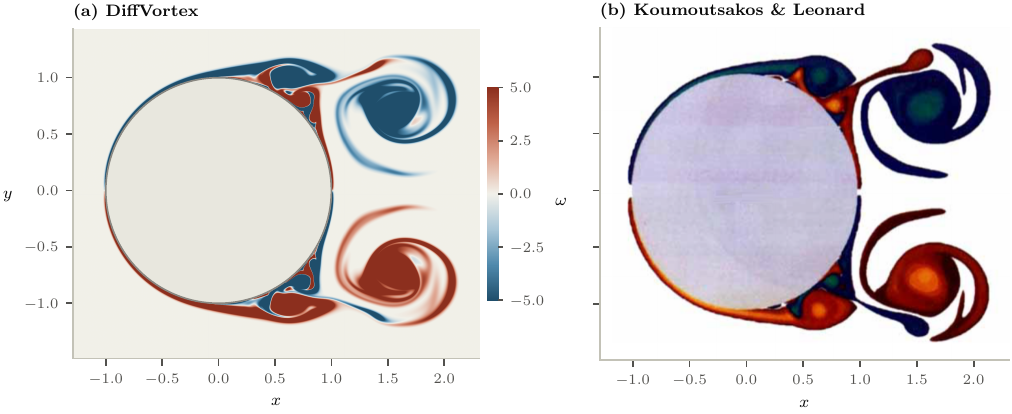}
\caption{Vorticity field $\omega$ of the impulsively started cylinder at $\mathrm{Re}=9500$ and $T=5$.
\textbf{(a)}~\texttt{DiffVortex}.
\textbf{(b)}~The corresponding calculation of Koumoutsakos and Leonard~\cite{koumoutsakos1995high}, reproduced from their Fig.~26 with permission.}
\label{fig:cylinder-high-re}
\end{figure*}

\paragraph{Inviscid elliptical vortices.}
To isolate the accuracy of the particle interaction from viscous
vorticity production at immersed boundaries, we next consider freely
evolving vortices in the inviscid limit. As an exact benchmark, we use a
Kirchhoff vortex: a patch of uniform vorticity with elliptical support. With this initial condition, the incompressible Euler equations admit an exact solution in which the elliptical vorticity patch rotates uniformly, like a rigid body, preserving its elliptical shape~\cite{lamb1932hydrodynamics,xu2023dynamics}. In \Cref{fig:elliptical-vortices} we compare the measured orientation of the numerical vortex with the exact solution; the agreement is excellent. The aspect ratio is likewise conserved, to within a relative error of order $10^{-4}$ and with no secular growth over the full revolution. We then consider the continuously distributed Gaussian elliptical vortices studied by Xu and Krasny~\cite{xu2023dynamics}. Unlike the Kirchhoff patch, these vortices deform internally and generate
filamentary structures as they evolve, providing a nontrivial test of the
long-range interaction in a setting where no immersed-boundary treatment
is involved. The dynamics of the Gaussian vortex agree closely with those reported by Xu and Krasny~\cite{xu2023dynamics}, whose Fig.~7 (column~1) shows the same vortex at the same five times.

\begin{figure*}[!htbp]
\centering
\includegraphics[width=\linewidth]{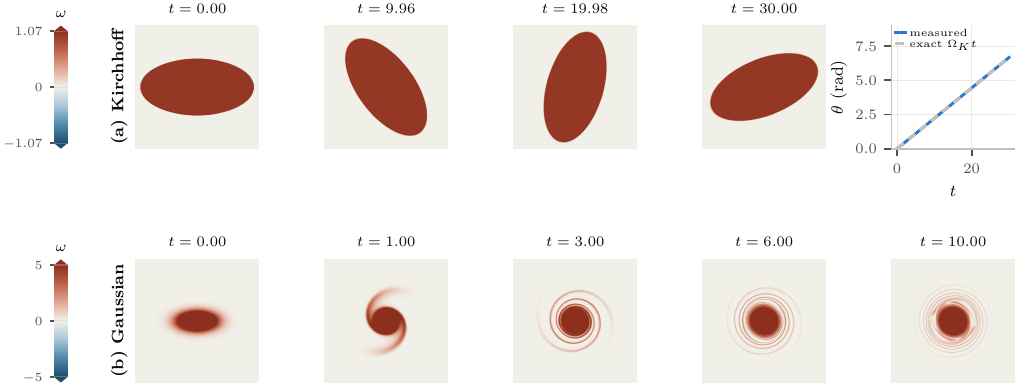}
\caption{Inviscid evolution of elliptical vortices; color shows the vorticity $\omega$.
\textbf{(a)}~Kirchhoff elliptical vortex patch at $t=0$, $9.96$, $19.98$ and $30$, and its measured orientation $\theta(t)$ against the exact uniform rotation $\theta=\Omega_K t$, where $\Omega_K=\omega_0ab/(a+b)^2$ for a patch of vorticity $\omega_0$ with semi-axes $a$ and $b$. The patch rotates rigidly and preserves its shape, as the exact solution predicts.
\textbf{(b)}~Gaussian elliptical vortex of Xu and Krasny~\cite{xu2023dynamics} at $t=0$, $1$, $3$, $6$ and $10$. The vortex axisymmetrizes while shedding thin spiral filaments, a strongly deforming test of the long-range interaction. These panels correspond to Fig.~7 (column~1) of Xu and Krasny~\cite{xu2023dynamics}, which shows a wider range of vorticity amplitudes.}
\label{fig:elliptical-vortices}
\end{figure*}

\paragraph{Multiple moving and deforming bodies.}
We next demonstrate the geometric flexibility gained by retaining a
Lagrangian description of the immersed boundaries. In the first example,
three circular cylinders are placed in tandem and undergo prescribed transverse
oscillations with alternating phase. Each cylinder generates its own
surface vorticity while interacting with the wakes of the other bodies.
In the second example, an ellipse in a uniform stream
undergoes a prescribed periodic deformation in which its transverse
semi-axis varies sinusoidally in time, the streamwise semi-axis being slaved to it so that the enclosed area is exactly conserved. The boundary points simply follow
the prescribed evolving ellipse, while the same particle and
boundary-vorticity machinery is used at every configuration. These
examples are intended as demonstrations of moving, multibody, and
deforming-boundary functionality rather than as separate physical
validation cases (\Cref{fig:moving-bodies}). In both cases $\mathrm{Re} = 200$, based on the rest diameter $D=1$ and free-stream speed $U=1$, with $\Delta t = 1.63\times10^{-2}$, $368$ surface points per body, $\delta = \Delta x = 8.54\times10^{-3}$, and a compressed operator at $P=4$, $\sqrt{b}=128$ and near-field stencil $w=4$. The three cylinders are spaced two diameters apart and oscillate transversely with amplitude $0.25D$ at unit frequency, in alternating phase; the ellipse deforms at frequency $0.5$ with semi-axis amplitude $0.25$, sweeping aspect ratios from $4\!:\!1$ to $1\!:\!2.25$.

\begin{figure*}[!htbp]
\centering
\includegraphics[width=\linewidth]{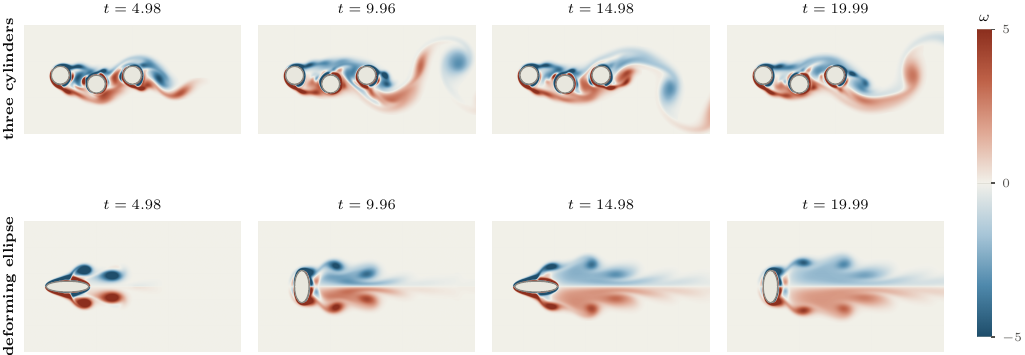}
\caption{Vorticity $\omega$ for flows with time-dependent immersed geometry at $\mathrm{Re}=200$, at $t\approx5$, $10$, $15$ and $20$.
Top: three cylinders in tandem undergoing prescribed transverse oscillations
with alternating phase.
Bottom: a periodically deforming ellipse with constant area in a uniform stream.
In both cases the body geometry changes while the precomputed MPO $\widehat F$ used for the long-range particle interaction remains unchanged.}
\label{fig:moving-bodies}
\end{figure*}

\paragraph{Differentiation through the flow solver.}
Finally, we verify that gradients can be propagated through the complete \texttt{DiffVortex} time integration. For the deforming ellipse, let $\epsilon$
denote the amplitude of the prescribed shape deformation and define the
finite-time drag objective
\begin{equation}
    J(\epsilon)
    =
    \int_{\Delta t}^{t_f} C_D(t;\epsilon)\,dt,
\end{equation}
where $C_D$ is the drag coefficient and $t_f$ the final time; the lower limit is $\Delta t$ because the impulse derivative supplying $C_D$ is a centered difference, which leaves no sample at the first or last step. The derivative is evaluated at $\epsilon=0.1$, with $\mathrm{Re}=100$, $\Delta t = 1.78\times10^{-2}$ and a $50$-step rollout, so $t_f \approx 0.870$.
Automatic differentiation gives $dJ/d\epsilon$ directly through the
particle trajectories, surface-vorticity generation, and compressed
far-field evaluation. We validate this derivative against centered finite
differences over a sequence of perturbation sizes (\Cref{fig:gradient-validation}). Agreement over the
range bounded by round-off errors below provides an end-to-end check
that differentiation through the tensor-compressed particle solver
recovers the sensitivity of a physical observable to the prescribed body
motion.

\begin{figure}[!htbp]
\centering
\includegraphics[width=0.9\linewidth]{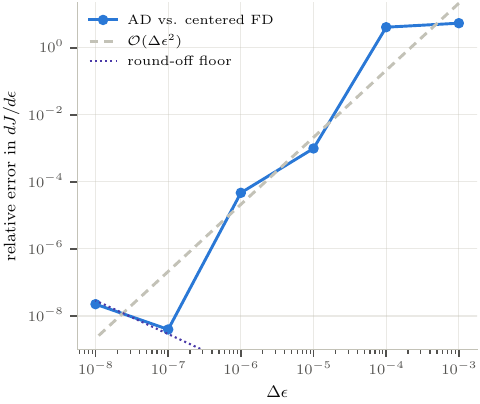}
\caption{End-to-end differentiation through \texttt{DiffVortex}, with the compressed far-field operator engaged at every step. Relative difference between the automatic-differentiation derivative $dJ/d\epsilon$ of the integrated drag and centered finite differences with step $\Delta\epsilon$ (solid), against a least-squares $\mathcal{O}(\Delta\epsilon^{2})$ guide line (dashed) and the centered-difference round-off floor $\varepsilon_{\mathrm{mach}}|J|/(\Delta\epsilon\,|J'|)$ (dotted). The finite differences converge to the automatic-differentiation result at a fitted order of $2.01$ between $\Delta\epsilon=10^{-3}$ and $10^{-8}$. This convergence is non-monotonic since the remeshing lattice makes the particle count piecewise constant in $\epsilon$. The floor rises as $\Delta\epsilon$ shrinks, since differencing ever closer values of $J$ loses absolute precision. The two smallest steps lie on it, at $1.4$ and $0.8$ times its value. Hence the rise from $\Delta\epsilon=10^{-7}$ to $10^{-8}$ can be explained by round-off errors.}
\label{fig:gradient-validation}
\end{figure}
\section{Discussion}\label{sec:discussion}

We have presented a quantum-inspired solver for two-dimensional incompressible flows past moving and deforming bodies. Its starting point is that, in Lagrangian vortex methods, the expensive long-range operation is convolution with a fixed, smooth, translation-invariant kernel, so it is the operator, not the field, that should be compressed. In interleaved binary coordinates the one-sided convolution operator inherits a low MPO rank from the MPS rank of the kernel itself, with a sharp constant of $2^d$, and once it is precomputed as an MPO, immersed bodies enter only through their boundary integral: flexible or moving bodies require no additional compression machinery.

The framework complements rather than competes with the Eulerian MPS solvers of~\cite{peddinti2024quantum,holscher2025quantum,vanhulst2025curvilinear,pisoni2025compression}. Eulerian MPS solvers are at their best when the velocity field itself admits a low-rank MPS description, a regime documented mainly for two-dimensional decaying turbulence and moderate-Reynolds-number flows in simple geometries. The present Lagrangian framework is designed for regimes in which the flow develops thin filamentary structures, when the body deforms in time, or when multiple bodies move independently. In such cases we expect the operator-compression view to be the more natural one, although a head-to-head comparison with Eulerian MPS solvers remains to be done. A unified solver that switches between the two compressions adaptively, depending on the local flow regime, is an attractive direction for future work.

Several extensions are natural. The extension to three dimensions requires the three-dimensional vector Biot--Savart kernel together with the vortex-stretching term; \Cref{thm:lifting} then applies with $d=3$, and the worst-case MPO rank inflates by a factor $2^3=8$ relative to the kernel rank, which itself may be larger in three dimensions. Fluid--structure interaction proper, in which the body responds to the fluid forces, requires coupling the body update in Phase~III of \Cref{fig:pipeline} to an elastic body model. Probability-distribution-level descriptions of turbulence, recently computed with MPS by Gourianov \emph{et al.}~\cite{gourianov2025tensor}, are a further direction in which operator and state compression might be combined.

The approach also has limitations. The box grid is uniform, and the moment-to-local contraction runs over all $b$ boxes whether or not they hold particles, so its $\mathcal{O}(bP^{2})$ cost is paid even where the flow is empty (\Cref{sec:results-time}). The optimal scaling also requires the particles to be uniformly distributed. However, re-meshing is applied frequently to control Lagrangian distortion~\cite{koumoutsakos2008flow}, leaving the particles approximately uniformly distributed within the occupied boxes. The compressed operator depends on $\delta$, $P$, the box grid and the near-field stencil, so changing any of these requires a new offline construction (\Cref{app:compression}). Finally, all results presented here are two-dimensional.

More broadly, the same operator-compression strategy should carry over to other systems governed by smooth, translation-invariant kernels, such as regularized Stokes flow, electrostatics and gravitation.

\appendix
\section{Proof of the lifting theorem}\label{app:proof}

\begin{proof}[Proof of \Cref{thm:lifting}]
We prove both estimates using the finite-state representation of binary subtraction. The one-dimensional construction is common to the serial and interleaved orderings; their different rank bounds arise only when the construction is extended to $d$ dimensions.

We begin with $d=1$. Consider the sampled vector
\[
f_i=f\left(\frac{i}{2^L-1}\right).
\]
Throughout this appendix, we order the tensor cores so that the last core carries the least significant bit:
\[
i=\sum_{k=0}^{L-1}i^{(k)}2^{L-1-k}.
\]
This reverses the convention of \Cref{sec:bitinterleaved} but does not change the maximal MPS rank.

Suppose that $f_i$ has an MPS representation
\begin{align}
f_i
&=
G^{(0)}_{r_1}(i^{(0)})
G^{(1)}_{r_1r_2}(i^{(1)})
\nonumber\\
&\qquad
\cdots
G^{(L-2)}_{r_{L-2}r_{L-1}}(i^{(L-2)})
G^{(L-1)}_{r_{L-1}}(i^{(L-1)}),
\label{eq:kernel-mps}
\end{align}
where all bond dimensions are at most $r$. Naively replacing every bit $i^{(k)}$ by $i^{(k)}-j^{(k)}$ fails because the $k$th bit of $i-j$ is not, in general, the difference of the corresponding bits of $i$ and $j$. Working from the least significant bit to the most significant bit, one instead has
\begin{align*}
(i-j)^{(L-1)}
&=i^{(L-1)}-j^{(L-1)}\pmod 2,\\
(i-j)^{(L-2)}
&=i^{(L-2)}-j^{(L-2)}-\kappa_{L-1}\pmod 2,\\
&\hspace{1cm}\vdots\\
(i-j)^{(1)}
&=i^{(1)}-j^{(1)}-\kappa_2\pmod 2,\\
(i-j)^{(0)}
&=i^{(0)}-j^{(0)}-\kappa_1\pmod 2,
\end{align*}
where $\kappa_k\in\{0,1\}$ is the borrow generated at level $k$:
\[
\kappa_k=
\begin{cases}
0, & \text{if subtraction at level $k$ requires no borrow},\\
1, & \text{otherwise}.
\end{cases}
\]

The borrow can be included in the MPO bond index. Moving from the least to the most significant bit, each core receives the incoming borrow, evaluates the corresponding bit of $i-j$, and passes the outgoing borrow to the next core. More precisely, define
\begin{align}
&H^{(L-1)}_{r_{L-1},\kappa_{L-1}}
   \bigl(i^{(L-1)},j^{(L-1)}\bigr)
\nonumber\\
&\qquad =
G^{(L-1)}_{r_{L-1}}
\bigl(i^{(L-1)}-j^{(L-1)}\bmod 2\bigr)
\nonumber\\
&\qquad\quad\times
\mathbbm{1}_{
\left\{
\kappa_{L-1}
=
\mathbbm{1}_{\,i^{(L-1)}-j^{(L-1)}<0}
\right\}},
\label{eq:last-borrow-core}
\end{align}
and, for $k=1,\ldots,L-2$,
\begin{align}
&H^{(k)}_{r_k,\kappa_k,r_{k+1},\kappa_{k+1}}
   \bigl(i^{(k)},j^{(k)}\bigr)
\nonumber\\
&\qquad =
G^{(k)}_{r_kr_{k+1}}
\bigl(i^{(k)}-j^{(k)}-\kappa_{k+1}\bmod 2\bigr)
\nonumber\\
&\qquad\quad\times
\mathbbm{1}_{
\left\{
\kappa_k
=
\mathbbm{1}_{\,i^{(k)}-j^{(k)}-\kappa_{k+1}<0}
\right\}}.
\label{eq:interior-borrow-core}
\end{align}
At the most significant bit, define
\begin{align}
&H^{(0)}_{r_1,\kappa_1}
   \bigl(i^{(0)},j^{(0)}\bigr)
\nonumber\\
&\qquad =
G^{(0)}_{r_1}
\bigl(i^{(0)}-j^{(0)}-\kappa_1\bmod 2\bigr)
\mathbbm{1}_{\{i^{(0)}-j^{(0)}-\kappa_1\geq0\}}.
\label{eq:first-borrow-core}
\end{align}
The indicator functions in \eqref{eq:last-borrow-core} and \eqref{eq:interior-borrow-core} propagate the unique admissible borrow sequence. The final indicator in \eqref{eq:first-borrow-core} eliminates precisely those pairs for which $i<j$. Consequently,
\begin{align}
(T_f)_{i,j}
&=
H^{(0)}_{r_1,\kappa_1}\bigl(i^{(0)},j^{(0)}\bigr)
H^{(1)}_{r_1,\kappa_1,r_2,\kappa_2}\bigl(i^{(1)},j^{(1)}\bigr)
\cdots
\nonumber\\
&\qquad\times
H^{(L-2)}_{r_{L-2},\kappa_{L-2},r_{L-1},\kappa_{L-1}}
\bigl(i^{(L-2)},j^{(L-2)}\bigr)
\nonumber\\
&\qquad\times
H^{(L-1)}_{r_{L-1},\kappa_{L-1}}
\bigl(i^{(L-1)},j^{(L-1)}\bigr)
\nonumber\\
&=f_{i-j}\mathbbm{1}_{i\geq j}.
\label{eq:lifted-mpo}
\end{align}
Every internal bond is indexed by $(r_k,\kappa_k)$ with $\kappa_k\in\{0,1\}$, so the maximal MPO rank is at most $2r$.

In the serial ordering, the bits belonging to each physical coordinate form a consecutive block. The preceding construction may therefore be applied independently within each coordinate block. At a bond inside a block, the kernel bond index $r_k$ is augmented by the single borrow bit for that coordinate. At the most significant bit of each block, the final indicator enforces $i_\mu\geq j_\mu$, and the borrow state is terminated before the next coordinate is processed. Thus at most one borrow bit is carried across any bond, giving

\[
R_{\mathrm S}\leq 2r_{\mathrm S}.
\]

This gives the same factor of two as the multilevel Toeplitz construction of~\cite[Theorem~4.6]{kazeev2013multilevel}, but here the rank $r_S$ refers to the QTT representation of the kernel sampled on $[0,1]^d$, rather than that of the full signed generating vector.

In the interleaved ordering, by contrast, the bits of all $d$ coordinates at a given dyadic level belong to the same core. Binary subtraction must therefore propagate the joint borrow vector

\[
\boldsymbol{\kappa}_k
=
(\kappa_{k,1},\ldots,\kappa_{k,d})
\in\{0,1\}^d
\]

between consecutive levels. Since this vector has $2^d$ possible states, enlarging each kernel bond index $r_k$ to $(r_k,\boldsymbol{\kappa}_k)$ gives

\[
R_{\mathrm I}\leq2^d r_{\mathrm I}.
\]

This proves both estimates.
\end{proof}

\begin{remark}
Although the construction above was presented for linear convolution, the same rank bounds apply to periodic convolution: subtraction is then taken modulo $2^L$, and the final indicator in \eqref{eq:first-borrow-core} is simply dropped. The lifting factor is independent of the grid resolution but depends on the chosen ordering and, through the kernel rank, on the physical dimension. For $d\leq3$, the explicit interleaved factor satisfies $2^d\leq8$.
\end{remark}

\subsection{Sharpness of the lifting constants}

We first establish sharpness in one dimension. Let $f(x)=a^x$ with $a>0$ and define
\[
q=a^{1/(2^L-1)}.
\]
The sampled kernel factorizes exactly across its binary modes:
\[
f\left(\frac{i}{2^L-1}\right)
=q^i
=\prod_{k=0}^{L-1}\left(q^{2^{L-1-k}}\right)^{i^{(k)}}.
\]
It therefore has exact MPS rank one~\cite{khoromskij2011quantics}. The corresponding one-sided (triangular) convolution operator is
\[
A_{i,i'}=q^{i-i'}\mathbbm{1}_{i\geq i'},
\qquad
i,i'\in\{0,\ldots,2^L-1\}.
\]

We show that every nontrivial unfolding of the quantized operator has rank two. Fix $k\in\{1,\ldots,L-1\}$ and partition the bits into the low sets
\[
\mathcal{I}_{\mathrm{lo}}=\{i^{(L-k)},\ldots,i^{(L-1)}\},
\qquad
\mathcal{I}'_{\mathrm{lo}}=\{i'^{(L-k)},\ldots,i'^{(L-1)}\},
\]
and the high sets
\[
\mathcal{I}_{\mathrm{hi}}=\{i^{(0)},\ldots,i^{(L-k-1)}\},
\qquad
\mathcal{I}'_{\mathrm{hi}}=\{i'^{(0)},\ldots,i'^{(L-k-1)}\}.
\]
Let $I_{\mathrm{lo}},I'_{\mathrm{lo}}\in\{0,\ldots,2^k-1\}$ and $I_{\mathrm{hi}},I'_{\mathrm{hi}}\in\{0,\ldots,2^{L-k}-1\}$ be the integers represented by these bit sets. Then
\[
i=I_{\mathrm{lo}}+2^kI_{\mathrm{hi}},
\qquad
i'=I'_{\mathrm{lo}}+2^kI'_{\mathrm{hi}}.
\]
The exponent decomposes as
\[
i-i'=I_{\mathrm{lo}}-I'_{\mathrm{lo}}+2^k(I_{\mathrm{hi}}-I'_{\mathrm{hi}}),
\]
while the triangular indicator decomposes lexicographically:
\[
\mathbbm{1}_{i\geq i'}
=
\mathbbm{1}_{I_{\mathrm{hi}}>I'_{\mathrm{hi}}}
+
\mathbbm{1}_{I_{\mathrm{hi}}=I'_{\mathrm{hi}}}
\mathbbm{1}_{I_{\mathrm{lo}}\geq I'_{\mathrm{lo}}}.
\]
It follows that
\begin{align}
A_{i,i'}
&=
\underbrace{q^{I_{\mathrm{lo}}-I'_{\mathrm{lo}}}}
_{u_1(I_{\mathrm{lo}},I'_{\mathrm{lo}})}
\underbrace{
q^{2^k(I_{\mathrm{hi}}-I'_{\mathrm{hi}})}
\mathbbm{1}_{I_{\mathrm{hi}}>I'_{\mathrm{hi}}}
}_{v_1(I_{\mathrm{hi}},I'_{\mathrm{hi}})}
\nonumber\\
&\quad+
\underbrace{
q^{I_{\mathrm{lo}}-I'_{\mathrm{lo}}}
\mathbbm{1}_{I_{\mathrm{lo}}\geq I'_{\mathrm{lo}}}
}_{u_2(I_{\mathrm{lo}},I'_{\mathrm{lo}})}
\underbrace{
\mathbbm{1}_{I_{\mathrm{hi}}=I'_{\mathrm{hi}}}
}_{v_2(I_{\mathrm{hi}},I'_{\mathrm{hi}})}.
\label{eq:exponential-unfolding}
\end{align}
Hence the unfolding matrix that separates the high bits from the $k$ low bits can be written as
\[
\mathcal{A}^{(k)}
=
\boldsymbol{u}_1\boldsymbol{v}_1^T
+
\boldsymbol{u}_2\boldsymbol{v}_2^T.
\]
The vectors $\boldsymbol{v}_1$ and $\boldsymbol{v}_2$ have disjoint nonempty supports and are therefore linearly independent. Likewise, $\boldsymbol{u}_1$ and $\boldsymbol{u}_2$ are linearly independent because $\boldsymbol{u}_1$ is nonzero everywhere, whereas $\boldsymbol{u}_2$ vanishes when $I_{\mathrm{lo}}<I'_{\mathrm{lo}}$. Thus
\[
\operatorname{rank}\mathcal{A}^{(k)}=2
\]
for every nontrivial unfolding, and the exact one-dimensional MPO rank is two.

For general $d$, consider the separable exponential
\[
f(x_1,\ldots,x_d)=\prod_{\mu=1}^d a_\mu^{x_\mu}.
\]
Its sampled tensor has exact MPS rank one in both the serial and interleaved orderings. The associated convolution operator is the Kronecker product
\[
T_f=A_1\otimes\cdots\otimes A_d,
\]
where each $A_\mu$ is a one-dimensional one-sided convolution operator of exact MPO rank two.

In the serial ordering, the binary modes of the factors $A_\mu$ occur in consecutive coordinate blocks. The ranks within each block are two, while the bonds separating consecutive coordinate blocks have rank one. Hence
\[
R_{\mathrm S}=2.
\]
In the interleaved ordering, the $k$th unfolding factorizes across all $d$ coordinates:
\[
\mathcal{T}^{(k)}
=
\mathcal{A}_1^{(k)}\otimes\cdots\otimes\mathcal{A}_d^{(k)}.
\]
Therefore,
\[
\operatorname{rank}\mathcal{T}^{(k)}
=
\prod_{\mu=1}^d\operatorname{rank}\mathcal{A}_\mu^{(k)}
=2^d,
\]
and hence
\[
R_{\mathrm I}=2^d.
\]
The constants $2$ and $2^d$ in \Cref{thm:lifting} are therefore both sharp.

\section{Numerical comparison of the two orderings}
\label{app:ordering-comparison}

The sharp constants describe exact rank inflation relative to the kernel rank in the same ordering. They do not compare the kernel ranks $r_{\mathrm S}$ and $r_{\mathrm I}$ themselves, since these ranks arise from different tensor unfoldings. In particular, bounded rank in one ordering does not imply bounded rank in the other.

Here we compare the numerical MPO ranks produced by the two orderings at a fixed compression tolerance. \Cref{tab:ordering-ranks} reports the ranks of the M2L tensor $F_{p-q,\gamma,\alpha}$ on $\sqrt{b}\times \sqrt{b}$ grids using least-significant-bit-first ordering and a prescribed relative tolerance of $10^{-8}$. The average is taken over the internal MPO bonds.

\begin{table}[!htbp]
\centering
\caption{Maximum and average MPO ranks of the moment-to-local tensor $F_{p-q,\gamma,\alpha}$ on $\sqrt{b}\times\sqrt{b}$ box grids, for the interleaved and serial orderings, at QTCI relative tolerance $10^{-8}$. Both orderings place the least significant bit first within each coordinate and use triangular packing of the multipole indices. All runs use $P=3$, $\delta=10^{-2}$ and near-field stencil $\|p-q\|_\infty\le4$; the average is over the internal bonds.}
\label{tab:ordering-ranks}
\small
\setlength{\tabcolsep}{3pt}
\begin{tabular}{ccrrr}
\toprule
$\sqrt{b}$ &
Ordering &
Max.\ rank &
Avg.\ rank &
Parameters \\
\midrule
16  & Interleaved & 100 & 31.60 & 86\,320\\
16  & Serial      & 100 & 26.92 & 62\,288\\
\addlinespace
32  & Interleaved & 101 & 39.59 & 150\,502\\
32  & Serial      & 116 & 33.41 & 117\,274\\
\addlinespace
64  & Interleaved & 103 & 44.52 & 202\,796\\
64  & Serial      & 129 & 39.27 & 183\,856\\
\addlinespace
128 & Interleaved & 103 & 47.30 & 243\,266\\
128 & Serial      & 140 & 45.19 & 273\,122\\
\bottomrule
\end{tabular}
\end{table}

The maximum interleaved rank remains nearly constant over the tested range, increasing only from $100$ to $103$ as $\sqrt{b}$ increases from $16$ to $128$. By contrast, the maximum serial rank increases from $100$ to $140$. Thus, despite its larger formal lifting factor, the interleaved ordering exhibits substantially better maximum-rank stability for this kernel.

The average ranks give a more qualified comparison. The serial ordering has the smaller average rank at every grid size tested, although the difference is smallest at the largest grid. At $\sqrt{b}=128$, the average ranks are $47.30$ and $45.19$ in the interleaved and serial orderings, respectively. The parameter counts are not governed by the average rank alone, since they scale with the squares of the bond dimensions and the serial rank profile is the more peaked. As the table shows, the serial representation is the smaller at $\sqrt{b}\le64$, by $28\%$ at $\sqrt{b}=16$, but the ordering reverses at $\sqrt{b}=128$, where it exceeds the interleaved count by $12\%$.

These results illustrate why the exact lifting constants alone do not determine which ordering is preferable. Serial ordering carries only one borrow bit at a time but contains tensor cuts separating complete physical coordinates. The ranks across these cuts measure global coordinate-separation complexity and may grow with problem size. Interleaved ordering instead groups the coordinates at each dyadic scale. For the kernel considered here, the resulting maximum numerical rank remains nearly bounded even though its universal exact lifting factor is larger.

The distinction between exact and approximate rank is also essential. \Cref{thm:lifting} constructs exact representations and does not characterize the singular-value decay of their tensor unfoldings. The ranks in \Cref{tab:ordering-ranks} are instead approximate ranks obtained after truncation at a prescribed tolerance and may be substantially smaller than the exact construction bounds.

\section{Derivation of the moment-to-local coefficients}\label{app:coeffs}

We derive the two-box moment-to-local operator for the blob-regularized Biot--Savart kernel~\eqref{eq:biot_savart_kernel}, i.e.\ the explicit coefficients $C_{\gamma,\alpha}(\boldsymbol{R})$ appearing in \Cref{sec:method}. Suppose the targets $\boldsymbol{x}_i$ lie in a box $\mathcal{C}_t$ with center $\boldsymbol{c}_t$ and side length $h$, and the sources $\boldsymbol{x}'_j$ lie in a box $\mathcal{C}_s$ with center $\boldsymbol{c}_s$ and side length $h$, with $\mathcal{C}_s$ and $\mathcal{C}_t$ well separated so that, taking $\boldsymbol{R} = \boldsymbol{c}_t- \boldsymbol{c}_s$, we have $R:= |\boldsymbol{R}| > (2+\sqrt{2})h$. Let $\Lambda^2 := R^2+ \delta^2$ and define the box-local coordinates $\boldsymbol{t} = \boldsymbol{x} - \boldsymbol{c}_t$ and $\boldsymbol{s} = \boldsymbol{x}' - \boldsymbol{c}_s$. Define
\begin{align*}
\psi(\boldsymbol{t},\boldsymbol{s}) &= \frac{1}{4\pi}\log(|\boldsymbol{x} - \boldsymbol{x}'|^2 + \delta^2) - \frac{1}{2\pi}\log\Lambda \\
&= \frac{1}{4\pi}\log(|\boldsymbol{t} - \boldsymbol{s} + \boldsymbol{R}|^2 + \delta^2) - \frac{1}{2\pi}\log\Lambda,
\end{align*}
so that the induced velocity components are $u = -\partial_{t_2}\psi$ and $v = \partial_{t_1}\psi$. Expanding,
\begin{align*}
    \psi(\boldsymbol{t},\boldsymbol{s}) &= \frac{1}{4\pi}\log(|\boldsymbol{t} - \boldsymbol{s}|^2 + 2(\boldsymbol{t} - \boldsymbol{s})\cdot\boldsymbol{R} + R^2 + \delta^2)\\
    &\quad- \frac{1}{2\pi}\log\Lambda\\
    &= \frac{1}{4\pi}\log\Big(1 + \frac{|\boldsymbol{t} - \boldsymbol{s}|^2 + 2(\boldsymbol{t} - \boldsymbol{s})\cdot\boldsymbol{R}}{\Lambda^2}\Big)\\
    &= \frac{1}{4\pi}\sum^\infty_{n =1} (-1)^{n+1}\frac{z^n}{n},
\end{align*}
where $z = \big(|\boldsymbol{t} - \boldsymbol{s}|^2 + 2(\boldsymbol{t} - \boldsymbol{s})\cdot\boldsymbol{R}\big)/\Lambda^2$. To see that the well-separation condition ensures convergence, note that $z = \big(|\boldsymbol{x}-\boldsymbol{x}'|^2 + \delta^2 - \Lambda^2\big)/\Lambda^2$, so $|z|<1$ if and only if $|\boldsymbol{x}-\boldsymbol{x}'|^2 + \delta^2 < 2\Lambda^2$. Since $|\boldsymbol{t}|,|\boldsymbol{s}| \le h/\sqrt{2}$, we have $|\boldsymbol{x}-\boldsymbol{x}'| \le R + \sqrt{2}h$, and the series therefore converges for \emph{all} particle positions in the two boxes provided $(R+\sqrt{2}h)^2 < 2R^2$, i.e.\ $R > \sqrt{2}h/(\sqrt{2}-1) = (2+\sqrt{2})h \approx 3.41h$, the well-separation condition stated above. On the uniform grid of \Cref{sec:method}, this is guaranteed by assigning the boxes with $\|p-q\|_\infty \le 3$ to the near field; the implementation uses the slightly more conservative $\|p-q\|_\infty \le 4$ (see \Cref{sec:method}). Let $\boldsymbol{\xi} = \boldsymbol{t}-\boldsymbol{s}$. Then
\begin{align*}
z^n &= \frac{1}{\Lambda^{2n}}\big(|\boldsymbol{\xi}|^2 + 2(\boldsymbol{\xi}\cdot\boldsymbol{R})\big)^n\\
&= \frac{1}{\Lambda^{2n}}\sum_{m=0}^n \binom{n}{m}(|\boldsymbol{\xi}|^2)^m(2\boldsymbol{\xi}\cdot\boldsymbol{R})^{n - m},
\end{align*}
with
\begin{align*}
(|\boldsymbol{\xi}|^2)^m &= \sum_{r = 0}^m \binom{m}{r}\xi_1^{2r}\xi_2^{2m-2r}, \\
(2\boldsymbol{\xi}\cdot\boldsymbol{R})^{n-m} &= 2^{n-m}\sum_{p = 0}^{n-m} \binom{n-m}{p}R_1^pR_2^{n - m -p}\\
&\qquad\times\xi_1^{p}\xi_2^{n - m -p}.
\end{align*}
Hence
\begin{align*}
z^n &= \frac{1}{\Lambda^{2n}}\sum_{m=0}^n\sum_{r = 0}^m\sum_{p = 0}^{n - m}2^{n-m}\binom{n}{m}\binom{m}{r}\binom{n-m}{p} \\
&\quad\times R_1^pR_2^{n - m -p}\xi_1^{p + 2r}\xi_2^{n + m -p -2r}.
\end{align*}
Letting $\beta = (\beta_1, \beta_2) = (p+2r,\ n+m-p-2r)$ be a multi-index, this becomes
\begin{equation*}
z^n = \frac{1}{\Lambda^{2n}} \sum_\beta A_{n,\beta}(\boldsymbol{R})\,\boldsymbol{\xi}^\beta,
\end{equation*}
\begin{align*}
A_{n,\beta}(\boldsymbol{R}) &=  \sum_{\substack{0 \leq m \leq n,\ 0\leq r \leq  m\\ 0 \leq p \leq n-m \\ p+2r = \beta_1 \\
n+m -p-2r = \beta_2}}2^{n-m}\binom{n}{m}\binom{m}{r}\binom{n-m}{p} \\
&\quad\times R_1^pR_2^{n - m -p},
\end{align*}
where by construction $|\beta| := \beta_1 + \beta_2 = n + m \leq 2n$, so the sum over $\beta$ is finite. Truncating at multipole order $P$,
\begin{equation*}
\psi(\boldsymbol{t},\boldsymbol{s}) \approx \frac{1}{4\pi}\sum_\beta B_{P,\beta}(\boldsymbol{R})\,\boldsymbol{\xi}^\beta,
\end{equation*}
\begin{equation*}
B_{P,\beta}(\boldsymbol{R}) = \sum_{n = 1}^P \frac{(-1)^{n+1}}{n\Lambda^{2n}}A_{n,\beta}(\boldsymbol{R}),
\end{equation*}
with $|\beta| \leq 2P$. By the binomial expansion,
\begin{equation*}
\boldsymbol{\xi}^\beta = (\boldsymbol{t} - \boldsymbol{s})^\beta = \sum_{\gamma \leq \beta}\binom{\beta}{\gamma}(-1)^{|\beta - \gamma|}\boldsymbol{t}^\gamma \boldsymbol{s}^{\beta - \gamma},
\end{equation*}
where $\binom{\beta}{\gamma} = \binom{\beta_1}{\gamma_1}\binom{\beta_2}{\gamma_2}$ and $|\gamma| \leq 2P$. Hence
\begin{align}
    \psi(\boldsymbol{t},\boldsymbol{s})&= \frac{1}{4\pi}\sum_\beta\sum_{\gamma \leq \beta} B_{P,\beta}(\boldsymbol{R})\binom{\beta}{\gamma}(-1)^{|\beta - \gamma|}\boldsymbol{t}^\gamma \boldsymbol{s}^{\beta - \gamma} \notag\\
    &= \sum_{\gamma}\boldsymbol{t}^\gamma \sum_\alpha C_{\gamma,\alpha}(\boldsymbol{R})\boldsymbol{s}^{\alpha},
\end{align}
where
\begin{align*}
C_{\gamma,\alpha}(\boldsymbol{R}) &= \frac{1}{4\pi}\binom{\alpha + \gamma}{\gamma}(-1)^{|\alpha|}\sum_{n = 1}^P \frac{(-1)^{n+1}}{n\Lambda^{2n}} \\
&\quad\times\sum_{ \substack{0 \leq m \leq n\\0\leq r \leq  m\\ 0 \leq p \leq n-m \\ p+2r = \alpha_1 + \gamma_1 \\
n+m -p-2r = \alpha_2 + \gamma_2}} 2^{n-m}\binom{n}{m}\binom{m}{r}\binom{n-m}{p} \\
&\quad\times R_1^pR_2^{n - m -p}.
\end{align*}

\section{Operator construction cost}\label{app:compression}

\Cref{tab:timing} reports the cost of \emph{applying} the compressed operator. The operator must first be built, and that cost is incurred once per $(P,\sqrt{b},\delta)$ and then amortized over every velocity evaluation of every simulation that reuses it: the compressed tensor train depends only on $P$, $\sqrt{b}$, $\delta$ and the near-field stencil. \Cref{tab:compression} records that cost, and the size of the resulting operator, including those used in \Cref{fig:timing}. The cost grows with both parameters, but sublinearly in $\sqrt{b}$: a sixteenfold increase in $\sqrt{b}$ raises it roughly fourfold.

The compressed operator is also small. Every entry in \Cref{tab:compression}
is roughly one to two megabytes, and the growth with $\sqrt{b}$ is weak: at $P=4$ a sixteenfold increase in $\sqrt{b}$ (a $256$-fold increase in the box count), from $\sqrt{b}=32$ to $512$, enlarges the operator by only $50\%$. This is the practical consequence of the rank behavior of \Cref{fig:rank-bsqrt}. The tensor train holds $\mathcal{O}(n_c\chi^{2})$ parameters for $n_c$ cores and maximal rank $\chi$; refining the grid adds cores only logarithmically in $b$, while $\chi$ stays flat. The ten operators in \Cref{tab:compression} occupy about $15$\,MB in total.

\begin{table}[htbp]
\centering
\caption{Cost of constructing the compressed moment-to-local operator $\widehat F$ by QTCI, and its size, for $P=3$ and $P=4$, at $\delta=10^{-3}$, near-field stencil $\|p-q\|_\infty\le4$ and QTCI tolerance $10^{-7}$ (including the operators used in \Cref{fig:timing}). Build times are wall-clock seconds on eight CPU cores. Size is the storage of the tensor-train cores: the parameter count times eight bytes (float64).}\label{tab:compression}
\input{figs/compression_table.tex}
\end{table}

\begin{acknowledgments}
SV acknowledges support from NSF under grant DMS-2513346.
This research was supported in part through computational resources and services provided by Advanced Research Computing at the University of Michigan, Ann Arbor.
\end{acknowledgments}
\clearpage

\section*{Code availability}
\texttt{DiffVortex} is implemented in PyTorch and will be released under an open-source license soon.

\bibliographystyle{apsrev4-2}
\bibliography{refs}

\end{document}

%% file: figs/timing_table.tex
\begin{tabular}{rrrrrr}
\toprule
$\sqrt{b}$ & $N$ & Fast (s) & Direct (s) & Speed-up & $10^4\times$ Error \\
\midrule
64 & $10^{4}$ & 1.063 & 0.002 & 0.0$\times$ & 6.81 \\
64 & $3\times 10^{4}$ & 1.172 & 0.014 & 0.0$\times$ & 7.32 \\
64 & $10^{5}$ & 1.193 & 0.142 & 0.1$\times$ & 6.54 \\
64 & $3\times 10^{5}$ & 1.262 & 1.266 & 1.0$\times$ & 6.72 \\
64 & $10^{6}$ & 1.536 & 14.025 & 9.1$\times$ & 6.66 \\
128 & $10^{4}$ & 2.133 & 0.002 & 0.0$\times$ & 7.19 \\
128 & $3\times 10^{4}$ & 3.935 & 0.014 & 0.0$\times$ & 7.58 \\
128 & $10^{5}$ & 4.748 & 0.143 & 0.0$\times$ & 6.74 \\
128 & $3\times 10^{5}$ & 4.824 & 1.268 & 0.3$\times$ & 7.09 \\
128 & $10^{6}$ & 5.048 & 14.032 & 2.8$\times$ & 6.99 \\
\bottomrule
\end{tabular}

%% file: figs/compression_table.tex
\begin{tabular}{rrrrr}
\toprule
& \multicolumn{2}{c}{Build time (s)} & \multicolumn{2}{c}{Size (MB)} \\
\cmidrule(lr){2-3} \cmidrule(lr){4-5}
$\sqrt{b}$ & $P=3$ & $P=4$ & $P=3$ & $P=4$ \\
\midrule
32 & 379 & 632 & 0.97 & 1.26 \\
64 & 484 & 836 & 1.26 & 1.52 \\
128 & 800 & 1\,137 & 1.47 & 1.68 \\
256 & 1\,280 & 1\,518 & 1.61 & 1.81 \\
512 & 1\,598 & 2\,564 & 1.73 & 1.91 \\
\bottomrule
\end{tabular}